\newcommand{\bsa}{\mathcal B_s(\mathcal H)}
\newcommand{\tsa}{\mathcal T_s(\mathcal H)}
\newcommand{\qeff}{\mathcal E(\mathcal H)}
\newcommand{\cstate}{\mathcal S(\Omega,\Sigma)}
\newcommand{\qstate}{\mathcal S(\mathcal H)}
\newcommand{\meas}{(\Omega,\Sigma)}
\newcommand{\smeas}{\mathcal M_{\mathbb R}(\Omega,\Sigma)}
\newcommand{\measf}{\mathcal F_{\mathbb R}(\Omega,\Sigma)}
\newcommand{\bsadual}{\bsa^*}
\newcommand\herm{\mathbb H(\mathcal H)}
\newcommand\unit{\mathbb U(\mathcal H)}
\newcommand\den{\mathbb D(\mathcal H)}
\newcommand\eff{\mathbb E(\mathcal H)}
\newcommand\proj{\mathbb P(\mathcal H)}
\newcommand\povm{\mathrm{POVM}(\mathcal H)}
\newcommand\pvm{\mathrm{PVM}(\mathcal H)}
\newcommand\range{\mathrm{Ran}}
\newcommand\Tr{\mathrm{Tr}}
\newcommand\spec{\mathrm{spec}}
\newcommand\id{\mathbbm 1}
\newcommand\freq{\frac{2\pi i}{d}}
\newcommand\w[1]{e^{-\freq #1}}
\newcommand\ip[2]{\langle#1 , #2\rangle}
\newcommand\op[2]{|#1\rangle\!\langle#2|}
\newcommand\abs[1]{\left|#1\right|}
\newcommand\av[1]{\left\langle #1 \right\rangle}
\newcommand\norm[1]{\|#1\|}
\newcommand\conj[1]{\overline{#1}}
\newcommand\ket[1]{\left|#1\right\rangle}
\newtheorem{definition}{Definition}
\newtheorem{theorem}{Theorem}
\newtheorem{lemma}{Lemma}
\begin{document}

\title{Quasi-probability representations of quantum theory with applications to quantum information science}
\author{Christopher Ferrie}
\affiliation{
Institute for Quantum Computing and Department of Applied Mathematics,
University of Waterloo,
Waterloo, Ontario, Canada, N2L 3G1}

\begin{abstract}
This article comprises a review of both the quasi-probability representations of infinite-dimensional quantum theory (including the Wigner function) and the more recently defined quasi-probability representations of finite-dimensional quantum theory.  We focus on both the characteristics and applications of these representations with an emphasis toward quantum information theory.  We discuss the recently proposed unification of the set of possible quasi-probability representations via frame theory and then discuss the practical relevance of negativity in such representations as a criteria for quantumness.
\end{abstract}

\date{\today}

\maketitle


\tableofcontents

\section{Introduction\label{S:Introduction}}
The quasi-probability representations of quantum theory evolved from the \emph{phase space} representations. Phase space is a natural concept in classical theory since it is equivalent to the state
space. The idea of formulating quantum theory in phase space dates back to the early days of
quantum theory when the Wigner function was introduced \cite{Wigner1932On}.  The term quasi-probability refers to
the fact that the function is not a true probability density as it takes on negative values for some
quantum states.

The Wigner phase space formulation of quantum theory is
equivalent to the usual abstract formalism of quantum theory in the same sense that Heisenberg's
matrix mechanics and Schrodinger's wave mechanics are equivalent to the abstract formalism.
The Wigner function representation is not the only quasi-probabilistic formulation of quantum theory.  However, in
most, if not all, of the familiar quasi-probability representations the kinematic
or ontic space of the representation is presumed to be
the usual canonical phase space of classical physics. In
the broader context of attempting to represent quantum
mechanics as a classical probability theory, the classical state space need not necessarily correspond to the phase
space of some classical canonical variables.  But even these more general
approaches are not applicable to describing quantum systems with a finite set of distinguishable
states.

Recently, analogues of the Wigner function for finite-dimensional
quantum systems have been proposed.  There are many adequate reviews of infinite dimensional phase representations \footnote{See, for example, \cite{Hillery1984Distribution,Lee1995Theory}.}, which we complement by reviewing the recently defined quasi-probability representations of finite-dimensional quantum states.  Moreover, we review a unifying formalism proposed in references \cite{Ferrie2008Frame,Ferrie2009Framed,Ferrie2010Necessity} which encompasses all previously constructed quasi-probability representations and provides a means of defining the class of all possible quasi-probability representations.

While entanglement is the most notable criteria for quantumness, it can only be defined for composite systems.  A non-negative
quasi-probability function is a true probability distribution, prompting some authors to suggest
that the presence of negativity in this function is a defining signature of non-classicality \emph{for a single system}.
Thus, a central concept in studies of the quantum-classical contrast in the
quasi-probability formalisms of quantum theory is the appearance of negativity.   However, it should be understood that any particular representation is non-unique and to some extent arbitrary since a state with negativity in one representation is positive in another.  The concept of entanglement arose from the study of \emph{nonlocality}.  This, and \emph{contextuality}, are perhaps the most famous notions of quantumness.  We will discuss the relationship between these and negativity in the final section.  Nonlocality and contextuality are terms often used as shorthand for the phrases ``there does not exist a \emph{local} hidden variable model for quantum theory'' and ``there does not exist a \emph{non-contextual} hidden variable model for quantum theory''.  Negativity is a similar concept.  Moreover, since the assumptions which go into the meta-concept of \emph{hidden variable model} are common to each notion, there is a sense in which it is convenient to call them ``equivalent'' \cite{Spekkens2008Negativity}.

The outline of the paper is as follows.  In section \ref{sec:probtoquasi} we see how quasi-probability arises quite naturally when trying to apply standard probabilistic constraints on quantum theory.  In section \ref{S:Infinite} the quasi-probability representations on infinite dimensional Hilbert space are reviewed.  Those defined for finite dimensional Hilbert spaces are review in section \ref{S:Finite}.  Both are shown to be part of a unified formalism utilizing \emph{frames} in section \ref{S:Frame}. Negativity as a criteria for quantumness and its relation to contextuality and nonlocality are discussed in section \ref{S:negativity}.  The final section summarizes what we have learned.

\section{From probability to quasi-probability\label{sec:probtoquasi}}

\subsection{Operational theories, ontological models and non-contextuality\label{sec:operational}}

An \emph{operational theory} \cite{Hardy2001Quantum,Spekkens2005Contextuality,Harrigan2007Ontological} is an attempt to mathematically model a real physical experiment.  The concepts in the theory are \emph{preparations}, \emph{systems}, \emph{measurements} and \emph{outcomes}.  A preparation $P$ is a proposition specifying how a real physical system has come to be the object of experimental investigation.  We will reason about a set of mutually exclusive preparations $\mathcal P$.  The system is assumed to then be measured according to some measurement procedure which produces an unambiguous answer called the outcome.  The measurement procedure is specified by a proposition $M$ belonging to a set of mutually exclusive possibilities $\mathcal M$.  The outcome is specified by a proposition $k$ which is assumed to belong to a set of mutually exclusive and exhaustive possibilities $K$.  This means that, in any given run of the experiment, exactly one of $K$ is true.

For example, $P$ could be the setting of the knob on a ``black box'' which sends objects to another ``black box'' with knob setting $M$ which outputs some sensory cue (e.g. an audible ``click'' or flash of light) labeled $k$ as each system passes through.  Note that $P$ and $M$ need not be statements about devices in a laboratory \cite{Schack2003Quantum}; $P$ could be a statement about a photon arriving from the sun, for example.

For a fixed preparation and measurement the outcome may not be deterministic.  The role of the theory is then to describe the probability of each each outcome conditional on the the various combinations of preparations and measurements.  That is, the theory should tabulate the numbers $\Pr(k|P\wedge M)$ for all $k,P$ and $M$.  For fixed $P$ and $M$, the mutually exclusive and exhaustive property of $K$ implies $\sum_k \Pr(k|P\wedge M)=1$.  An operational theory is then a specification $(\mathcal P,\mathcal M, K, \{\Pr(k|P\wedge M)\})$.

Quantum theory is an example of an operational theory where each preparation $P\in\mathcal P$ is associated with a density operator $\rho_P\in\den$ via the mapping $\rho:\mathcal P\to\den$ (for a explanation of the notation used in this article, see the Appendix).  In general, this mapping is not required to be injective or surjective; different preparations may lead to the same density operator and there may not exist a preparation which leads to every density operator.  Similarly, each measurement $M\in\mathcal M$ and outcome $k\in K$ is associated with an effect $E_{M,k}\in\eff$ via the mapping $E:\mathcal M\times K\to \eff$.  Again this mapping need not be injective or surjective.  Quantum theory prescribes the probability distributions $\Pr(k|P\wedge M)=\Tr(\rho_P E_{M,k})$ which is called the Born rule.  Since $K$ is a set of mutually exclusive and exhaustive possibilities, for fixed $M$, $\{E_{M,k}\}\in\povm$.

Notice that an operational theory only specifies the probabilities given the preparations in the set $\mathcal P$.  Suppose, for example, we are told a coin is tossed which determines which of two preparations procedures are implemented in a laboratory experiment.  The operational theory does not offer the probability of a given outcome conditional on this information.  However, the laws of probability provide us with the tools necessary to derive the desired probabilities.  In general, the task is, given any disjunction $\bigvee_i P_i$ of propositions (the set of which is denoted $D(\mathcal P)$) from the set of preparations $\mathcal P$, determine the probability distribution $\Pr(k|\bigvee_i P_i\wedge M)$.  Since the set of preparations is mutually exclusive, the laws of probability dictate
\begin{equation}\label{probability_prep_disjunt}
\Pr(k|\textstyle\bigvee_i P_i\wedge M)=\displaystyle\sum_i \Pr(k|P_i\wedge M)\Pr(P_i).
\end{equation}
Similarly, for any disjunction $\bigvee_j M_j$ of propositions from the set of measurements $\mathcal M$
\begin{equation}\label{probability_measure_disjunt}
\Pr(k|P\wedge \textstyle\bigvee_j M_j)=\displaystyle\sum_j \Pr(k|P\wedge M_j)\Pr(M_j).
\end{equation}
Putting Equations \eqref{probability_prep_disjunt} and \eqref{probability_measure_disjunt} together yields
\begin{equation}\label{logical_extension}
\Pr(k|\textstyle\bigvee_i P_i\wedge \textstyle\bigvee_j M_j)=\displaystyle\sum_{ij} \Pr(k|P_i\wedge M_j)\Pr(P_i)\Pr(M_j).
\end{equation}

Consider the example of quantum theory again.  Let $\bigvee_i P_i$ be an arbitrary disjunction of preparations from $\mathcal P$.  Then for fixed $k$ and $M$
\begin{align*}
\Pr(k|\textstyle\bigvee_i P_i\wedge M)&=\sum_i \Pr(k|P_i\wedge M)\Pr(P_i)\\
&=\sum_j \Tr(\rho_{P_i} E_{M,k})\Pr(P_i)\\
&= \Tr\left(\sum_i\Pr(P_i)\rho_{P_i} E_{M,k}\right)
\end{align*}
suggesting we define the function $\hat\rho:D(\mathcal P)\to\den$
\begin{equation}\label{def:prepdisjunct}
\hat\rho: \textstyle\bigvee_i P_i\mapsto \sum_i\Pr(P_i)\rho_{P_i}.
\end{equation}
Similarly, let us define a function $\hat E: D(\mathcal M)\times K\to \eff$ as
\begin{equation}\label{def:measdisjunct}
\hat E: (\textstyle\bigvee_j M_j,k) \mapsto \sum_j\Pr(M_j) E_{M_j,k}.
\end{equation}
Now, using $\hat \rho$ and $\hat E$ and defining $P_D:=\bigvee_i P_i$ and $M_D=\bigvee_j M_j$ we have
\begin{equation}\label{logical_Born}
\Pr(k|\textstyle\bigvee_i P_i\wedge \textstyle\bigvee_j M_j)=\Tr(\hat\rho_{P_D}\hat E_{M_D,k}).
\end{equation}
This works for the probability distributions of any operational theory and hence the set $(D(\mathcal P), D(\mathcal M), K, \mathcal H, \hat\rho,\hat E)$ is an operational theory itself.  Unless specified otherwise we will simply assume this analysis has been done when given an operational theory.  This is equivalent to assuming the $\rho$ and $E$ of an operational theory $(\mathcal P, \mathcal M, K, \mathcal H, \rho, E)$ have ranges which are convex sets.

An \emph{ontological model} is an attempt at interpreting an operational theory as an effectively epistemic theory of a deeper model describing the \emph{real state of affairs} of the system.  An ontological model posits a set $\Lambda$ of mutually exclusive and exhaustive \emph{ontic states} $\lambda$.  Each preparation $P$ is assumed to output the system in a particular ontic state $\lambda$.  However, the experimental arrangement may not allow one to have knowledge of which state was prepared.  This ignorance is quantified via a conditional probability distribution $\Pr(\lambda|P)$.  When this probability is viewed as a mapping $\mathcal P\to \textrm{Prob}(\Lambda)$, from preparations to the set of probability distributions on $\Lambda$, it is neither injective or surjective.  That is, in general, different preparation may give the same probability distribution of $\Lambda$ and, certainly, not all probability distributions are realized.  A measurement $M$ may not deterministically give an outcome $k$ which reveals the state $\lambda$ of the system.  Each measurement can then only be associated with an conditional probability distribution $\Pr(k|M\wedge\lambda)\in \textrm{Prob}(M)$.  Again, when this probability is viewed as a mapping $\mathcal M\times K \to \textrm{Prob}(M)$ it is not assumed to be injective or surjective.  An ontological model also requires that $\Lambda$ be such that knowledge of $\lambda$ renders knowledge of $P$ irrelevant.  In the language of probability,
\begin{itemize}
\item[CI] $k$ is conditionally independent of $P$ given $\lambda$.
\end{itemize}
Summarizing, an ontological model is a set $(\mathcal P, \mathcal M, K, \Lambda, \{\Pr(\lambda|P)\}, \{\Pr(k|M\wedge\lambda)\})$ such that CI holds.  As a consequence of CI, the law of total probability states
\begin{equation}\label{OM_total_prob}
\Pr(k|P\wedge M)=\sum_\lambda \Pr(k|M\wedge \lambda)\Pr(\lambda|P).
\end{equation}

Loosely speaking, \emph{non-contextuality} encodes the property that operationally equivalent procedures are represented equivalently in the ontological model \cite{Spekkens2005Contextuality}.  Two preparations are indistinguishable operationally if no measurement exists for which the probability of any outcome is different between the two.  A ontological model is non-contextual (with respect to it preparations) if the probability distributions over the ontic space the operationally equivalent preparations produce are equal.  Similarly, measurements can be operationally equivalent and the ontological model can be non-contextual with respect to them.  A mathematically concise definition of non-contextuality is as follows.

\begin{definition}\label{def:non-contextuality}
Let $(\mathcal P, \mathcal M, K, \Lambda, \{\Pr(\lambda|P)\}, \{\Pr(k|M\wedge \lambda)\})$ define an ontological model.
\begin{enumerate}[(a)]
\item Let $P,P'\in\mathcal P$.  The ontological model is \emph{preparation non-contextual} if for all $k\in K$, $M\in\mathcal M$ and $\lambda\in\Lambda$
\[\Pr(k|P\wedge M)=\Pr(k|P'\wedge M)\Rightarrow \Pr(\lambda|P)=\Pr(\lambda|P').\]
\item Let $M,M'\in\mathcal M$.  The ontological model is \emph{measurement non-contextual} if for all $k\in K$, $P\in\mathcal P$ and $\lambda\in\Lambda$
\[\Pr(k|P\wedge M)=\Pr(k|P\wedge M')\Rightarrow \Pr(k|M\wedge \lambda)=\Pr(k|M'\wedge \lambda).\]
\item The model is simply called a \emph{non-contextual ontological model} if it is both preparation and measurement non-contextual.
\end{enumerate}
\end{definition}

\subsection{Quantum theory, contextuality and quasi-probability\label{sec:quantum_non-contextuality}}

Recall that quantum theory as an operational model is defined via the mappings $\rho$ and $E$.  As an example, consider only those preparations associated with pure states.  These pure states are determined by a mapping $\rho=:\psi:\mathcal P \to \proj$.  The model of Beltrametti and Bugajski \cite{Beltrametti1995A} posits the ontic space $\Lambda=\proj$ and a sharp probability distribution
\begin{equation}\label{BB_prob}
\Pr(\lambda|P)d\lambda=\delta(\lambda-\psi_P)d\lambda
\end{equation}
for each preparation $P$.  This model suggests that the quantum state provides a complete specification of reality.  Recall each measurement procedure is associated with a POVM via the mapping $E:\mathcal M\times K\to \eff$.  Each measurement procedure $M$ implies a conditional probability
\begin{equation}\label{BB_meas}
\Pr(k|M\wedge \lambda)=\ip{\lambda}{E_{M,k}\lambda}.
\end{equation}
To show this is an ontological model, it remains only to verify that Equation \eqref{OM_total_prob} is satisfied.  It follows that
\begin{align*}
\Pr(k|P\wedge M)&=\int_\Lambda \Pr(k|M\wedge \lambda)\Pr(\lambda|P)d\lambda \\
&=\int_\Lambda \ip{\lambda}{E_{M,k}\lambda}\delta(\lambda-\psi_P)d\lambda \\
&=\ip{\psi_P}{E_{M,k}\psi_P},
\end{align*}
which is the Born rule for pure states.  The Beltrametti-Bugajski model is an ontological model for pure state quantum theory which is preparation non-contextual.  However, the range of the mapping $\psi$ in this case is not convex.  As we will now see, if we looked at all possible logical disjunctions of preparations in this model, so that range of the new mapping $\rho$ is $\den$, quantum theory admits no non-contextual model.  The most famous example of an ontological model for pure state quantum theory is the de Broglie-Bohm (or Bohmian) theory, which suffers the same flaw of admitting no non-contextual extension to the full (i.e. including density operators) quantum theory.  This model is discussed later in section \ref{S:context locality}.

\begin{lemma}\label{convexity lemma prep}
Suppose the convex range of $E$ contains a basis (for $\herm$) and the ontological model $(\mathcal P, \mathcal M, K, \Lambda, \{\Pr(\lambda|P)\}, \{\Pr(k|M\wedge \lambda)\})$ is preparation non-contextual.  Then, there exists a affine mapping (with domain the range of $\rho$) $\mu:\range(\rho)\to \textrm{Prob}(\Lambda)$ satisfying $\mu(\lambda|\rho_{P})=\Pr(\lambda|P)$.
\end{lemma}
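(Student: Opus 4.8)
The plan is to construct the affine map $\mu$ directly from the conditional probability distributions $\{\Pr(\lambda|P)\}$ and to use preparation non-contextuality to verify it is well-defined. The central observation is that preparation non-contextuality tells us precisely that the distribution $\Pr(\lambda|P)$ depends on the preparation $P$ only through the density operator $\rho_P$, provided $\rho_P$ carries all the operationally accessible information. First I would verify this claim: if $\rho_{P}=\rho_{P'}$, then for every $M$ and $k$ the Born rule gives $\Pr(k|P\wedge M)=\Tr(\rho_P E_{M,k})=\Tr(\rho_{P'}E_{M,k})=\Pr(k|P'\wedge M)$, so $P$ and $P'$ are operationally equivalent. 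By preparation non-contextuality (part (a) of Definition \ref{def:non-contextuality}), this forces $\Pr(\lambda|P)=\Pr(\lambda|P')$. Hence the assignment $\mu(\cdot|\rho_P):=\Pr(\cdot|P)$ is a well-defined function on $\range(\rho)$, independent of which preparation in the fiber over $\rho_P$ we choose.

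Having a well-defined set map $\mu:\range(\rho)\to\textrm{Prob}(\Lambda)$ satisfying $\mu(\lambda|\rho_P)=\Pr(\lambda|P)$, the remaining task is to show it is affine, i.e.\ that it respects convex combinations of density operators. This is where the convexity assumption on $\range(\rho)$ (established in the preceding discussion via Equation \eqref{def:prepdisjunct}) and the hypothesis that the convex range of $E$ contains a basis for $\herm$ both enter. The idea is that a convex combination $\sum_i p_i\rho_{P_i}$ is itself realized as $\hat\rho_{P_D}$ for the disjunction preparation $P_D=\bigvee_i P_i$ with weights $\Pr(P_i)=p_i$, so I can speak of $\mu$ evaluated at such a point. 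Using Equation \eqref{probability_prep_disjunt}, the conditional distribution for the disjunction factors as $\Pr(\lambda|P_D)=\sum_i p_i\Pr(\lambda|P_i)$, which is exactly the statement that $\mu(\lambda|\sum_i p_i\rho_{P_i})=\sum_i p_i\,\mu(\lambda|\rho_{P_i})$, i.e.\ affinity on the convex hull.

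The main obstacle is subtle: affinity is a statement about the density operator argument, but the decomposition of a given operator $\rho\in\range(\rho)$ into a convex combination is generally non-unique, so I must confirm that the value $\mu(\lambda|\rho)$ does not depend on the chosen decomposition. Two different disjunctions $P_D$ and $P_D'$ yielding the same operator $\hat\rho_{P_D}=\hat\rho_{P_D'}$ are again operationally equivalent by the Born rule, so preparation non-contextuality forces $\Pr(\lambda|P_D)=\Pr(\lambda|P_D')$ and the value is unambiguous. Here the assumption that the convex range of $E$ spans $\herm$ is what guarantees that operational equivalence is genuinely equivalent to equality of density operators: if the effects did not span, distinct operators could be operationally indistinguishable and the map $\mu$ would be forced to collapse them, destroying affinity. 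With well-definedness secured, affinity follows from the decomposition identity above, completing the construction.

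Once $\mu$ is established as a well-defined affine map, I would close by noting that its values lie in $\textrm{Prob}(\Lambda)$ automatically, since each $\Pr(\cdot|P)$ is a genuine probability distribution and convex combinations of probability distributions remain probability distributions. This verifies all the claimed properties: domain $\range(\rho)$, codomain $\textrm{Prob}(\Lambda)$, affinity, and the compatibility condition $\mu(\lambda|\rho_P)=\Pr(\lambda|P)$.
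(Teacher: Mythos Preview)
Your proposal is correct and follows essentially the same approach as the paper: first use preparation non-contextuality together with the Born rule to show that $\mu(\lambda|\rho_P):=\Pr(\lambda|P)$ is well-defined on $\range(\rho)$, then verify affinity by realizing any convex combination as a disjunction preparation and invoking the law of total probability. Your extra discussion of decomposition non-uniqueness is a reasonable clarification the paper leaves implicit; note however that the spanning hypothesis on $\range(E)$ is actually only needed for the converse implication (operational equivalence $\Rightarrow$ $\rho_P=\rho_{P'}$), which neither proof genuinely invokes---well-definedness and affinity follow from the forward direction alone, so your attribution of its role to ``saving affinity'' is slightly off, though harmless.
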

\begin{proof}
For equivalent disjunctions of preparations
\begin{equation}\label{equivalent prep to equivalent den}
\Tr(\rho_{P}  E_{M,k})=\Pr(k|P\wedge M)=\Pr(k|P'\wedge M)= \Tr(\rho_{P'}  E_{M,k}).
\end{equation}
Since the range of $ E$ is a basis it spans $\herm$ and therefore $\rho_P=\rho_{P'}$ if and only if $P$ and $P'$ are operationally equivalent.  Then, from the definition of preparation non-contextuality, $\Pr(\lambda|P)\neq\Pr(\lambda|P')$ implies $\rho_{P}\neq\rho_{P'}$.  Thus the mapping $\mu(\lambda|\rho_{P})=\Pr(\lambda|P)$ is well-defined.

Now we show the mapping $\mu$ is affine.  That is,
\[
\mu(\lambda|p\rho_P+(1-p)\rho_{P'})=p\mu(\lambda|\rho_P)+(1-p)\mu(\lambda|\rho_{P'})
\]
for all $p\in[0,1]$ and all $P,P'\in D(\mathcal P)$.  There is some $P''\in D(\mathcal P)$ such that $\rho_{P''}=p\rho_P+(1-p)\rho_{P'}$. That is to say, $P''=P\vee P'$ while $\Pr(P)=p$ and $\Pr(P')=1-p$.  From the non-contextuality assumption this implies
\begin{align*}
\Pr(\lambda|P'')&=\Pr(\lambda|P\vee P')\\
&=\Pr(P)\Pr(\lambda|P)+\Pr(P')\Pr(\lambda|P').
\end{align*}
Applying the definition of $\mu$ to this yields
\[\mu(\lambda|\rho_{P''})=p\mu(\lambda|\rho_P)+(1-p)\mu(\lambda|\rho_{P'}),\]
which proves $\mu$ is affine.
\end{proof}

\begin{lemma}\label{convexity lemma meas}
Suppose the range of $\rho$ contains a basis (for $\herm$) and the ontological model $(\mathcal P, \mathcal M, K, \Lambda, \{\Pr(\lambda|P)\}, \{\Pr(k|M\wedge \lambda)\})$ is measurement non-contextual.  Then, there exists a unique affine mapping $\xi:\eff\to 
(K)$ satisfying $\Pr(k|M\wedge \lambda)=\xi( E_{M,k})$.
\end{lemma}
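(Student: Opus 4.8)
The plan is to mirror the proof of Lemma~\ref{convexity lemma prep}, interchanging the roles of preparations and measurements: well-definedness will now exploit the hypothesis on the range of $\rho$ (rather than on the range of $E$), and affinity will be obtained from disjunctions of measurements in place of disjunctions of preparations. Throughout I fix a single $\lambda\in\Lambda$ and build the map $\xi(\,\cdot\,|\lambda)$ whose value on $E_{M,k}$ is to equal $\Pr(k|M\wedge\lambda)$; the $\lambda$-dependence left implicit in the statement is made explicit this way.

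First I would establish that an effect, together with its outcome label, determines the ontological response, so that setting $\xi(E_{M,k}|\lambda):=\Pr(k|M\wedge \lambda)$ is unambiguous. By the Born rule $\Pr(k|P\wedge M)=\Tr(\rho_P E_{M,k})$, so $E_{M,k}=E_{M',k}$ immediately gives $\Pr(k|P\wedge M)=\Pr(k|P\wedge M')$ for every $P$; conversely, since the range of $\rho$ contains a basis for $\herm$, equality of these probabilities for all $P$ forces $E_{M,k}=E_{M',k}$. Thus two measurements are operationally equivalent on outcome $k$ exactly when they assign the same effect, and measurement non-contextuality then yields $\Pr(k|M\wedge \lambda)=\Pr(k|M'\wedge \lambda)$, so $\xi(\,\cdot\,|\lambda)$ is well-defined on the range of $E$. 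To handle two pairs $(M,k)$ and $(M',k')$ with $k\neq k'$ but $E_{M,k}=E_{M',k'}$, I would relabel the outcomes of $M'$ so that $k'$ becomes $k$, reducing to the equal-label case just treated.

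Next I would prove that $\xi(\,\cdot\,|\lambda)$ is affine. Given effects $E_1=E_{M_1,k}$ and $E_2=E_{M_2,k}$ and $p\in[0,1]$, form the disjunction $M_D=M_1\vee M_2$ with $\Pr(M_1)=p$ and $\Pr(M_2)=1-p$. By Eq.~\eqref{def:measdisjunct} its effect is $\hat E_{M_D,k}=pE_1+(1-p)E_2$, while the coin flip selecting $M_1$ or $M_2$ is independent of the ontic state, so $\Pr(k|M_D\wedge \lambda)=p\,\Pr(k|M_1\wedge \lambda)+(1-p)\,\Pr(k|M_2\wedge \lambda)$. Substituting both identities into the definition of $\xi$ gives $\xi(pE_1+(1-p)E_2|\lambda)=p\,\xi(E_1|\lambda)+(1-p)\,\xi(E_2|\lambda)$, which is affinity.

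For uniqueness, an affine map on the convex set $\eff$ extends to an affine functional on its affine span, so the values prescribed on the realized effects $E_{M,k}$ determine $\xi(\,\cdot\,|\lambda)$ on all of $\eff$ provided those effects affinely span $\herm$ (as they do in quantum theory, since they include $0$, $\id$, and enough two-outcome effects); the normalization $\xi(\id|\lambda)=1$ follows from $\sum_k E_{M,k}=\id$ together with affinity. I expect the main obstacle to be the well-definedness step, specifically the bookkeeping needed to pass from the same-outcome formulation of measurement non-contextuality in Definition~\ref{def:non-contextuality} to arbitrary coincidences $E_{M,k}=E_{M',k'}$ across different outcome labels; this is precisely where the (implicit) assumption that the operational theory is closed under outcome relabeling must be invoked.
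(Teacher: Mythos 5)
Your proposal is correct and takes essentially the same route as the paper: the paper's entire proof of Lemma~\ref{convexity lemma meas} is the remark that it is ``similar to that of Lemma~\ref{convexity lemma prep}'', i.e., the preparation argument with the roles of $\rho$ and $E$ interchanged, using the spanning hypothesis for operational equivalence of effects and disjunctions of measurements (Equation~\eqref{def:measdisjunct}) for affinity, which is exactly what you carry out. Your extra care with outcome relabeling and with uniqueness of the affine extension fills in details the paper leaves implicit rather than departing from its approach.
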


The proof of Lemma \ref{convexity lemma meas} is similar to that of Lemma \ref{convexity lemma prep}.  Together, the mappings $\mu$ and $\xi$ are called a \emph{classical representation of quantum theory}.
\begin{lemma}\label{lemma_classical}
A classical representation satisfies, for all $\lambda\in\Lambda$, all $\rho\in\den$ and all $E\in\eff$,
\begin{enumerate}[(a)]
\item $\mu_\rho(\lambda)\in[0,1]$ and $\sum_\lambda \mu_\rho(\lambda)=1$,
\item $\xi_E(\lambda)\in[0,1]$ and $\xi_\id(\lambda)=1$,
\item $\Tr(\rho E)=\sum_\lambda  \mu_\rho(\lambda) \xi_E(\lambda)$.
\end{enumerate}
\end{lemma}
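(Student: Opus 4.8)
The plan is to read all three properties directly off the definitions of $\mu$ and $\xi$ furnished by Lemmas \ref{convexity lemma prep} and \ref{convexity lemma meas}, invoking only the normalization and total-probability constraints already built into the operational and ontological models. Throughout I would use the standing convention, fixed in the logical-extension discussion, that the operational theory has been replaced by its convex completion, so that $\range(\rho)=\den$ and $\range(E)=\eff$; in particular every density operator is realised as some $\rho_P$ and every effect as some outcome $E_{M,k}$ (for instance as the first outcome of the two-outcome POVM $\{E,\id-E\}$). The maps $\mu$ and $\xi$ are then available on all of $\den$ and $\eff$.

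For (a), Lemma \ref{convexity lemma prep} gives $\mu$ valued in $\textrm{Prob}(\Lambda)$ with $\mu_{\rho_P}(\lambda)=\Pr(\lambda|P)$, so for any $\rho=\rho_P$ the function $\mu_\rho$ is by construction a genuine probability distribution on $\Lambda$; this yields $\mu_\rho(\lambda)\in[0,1]$ and $\sum_\lambda\mu_\rho(\lambda)=1$ at once. For the bound in (b), writing $E=E_{M,k}$ we have $\xi_E(\lambda)=\Pr(k|M\wedge\lambda)\in[0,1]$ because it is a conditional probability. To obtain $\xi_\id(\lambda)=1$ I would feed $\xi$ the trivial single-outcome POVM $\{\id\}$: its unique outcome is certain, so $\Pr(k|M\wedge\lambda)=1$, whence $\xi_\id(\lambda)=1$.

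Part (c) is the repackaging of the law of total probability \eqref{OM_total_prob} with the Born rule. Given $\rho\in\den$ and $E\in\eff$, realise them as $\rho=\rho_P$ and $E=E_{M,k}$ and compute
\begin{align*}
\Tr(\rho E)&=\Pr(k|P\wedge M)\\
&=\sum_\lambda\Pr(k|M\wedge\lambda)\Pr(\lambda|P)\\
&=\sum_\lambda\mu_\rho(\lambda)\,\xi_E(\lambda),
\end{align*}
the last equality being merely the definitions of $\mu$ and $\xi$. The closest thing to an obstacle, and the only point I would treat carefully, is that $\mu_\rho(\lambda)$ and $\xi_E(\lambda)$ must depend on $\rho$ and $E$ alone, and not on the chosen realisations $P$ and $(M,k)$; this is precisely the well-definedness secured by preparation and measurement non-contextuality in Lemmas \ref{convexity lemma prep} and \ref{convexity lemma meas}. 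Granting that, the displayed chain holds for every pair $(\rho,E)$ and establishes (c).
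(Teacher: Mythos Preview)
Your argument is correct and matches the paper's own treatment: the paper does not give an explicit proof of Lemma~\ref{lemma_classical} but simply remarks that Lemmas~\ref{convexity lemma prep} and~\ref{convexity lemma meas} guarantee the mappings are well defined, convex linear, and satisfy (a)--(c). You have filled in precisely those details---reading (a) off the codomain $\textrm{Prob}(\Lambda)$, (b) off the conditional probability $\Pr(k|M\wedge\lambda)$ together with the trivial POVM $\{\id\}$, and (c) off Equation~\eqref{OM_total_prob} combined with the Born rule---and you correctly flag that well-definedness of $\mu$ and $\xi$ as functions of $\rho$ and $E$ (rather than of $P$ and $(M,k)$) is exactly what non-contextuality buys via the two preceding lemmas.
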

In Reference \cite{Ferrie2008Frame} the name ``classical representation'' was defined to be a set of mappings satisfying (a)-(c).  Through Lemmas \ref{convexity lemma prep} and \ref{convexity lemma meas} we have shown that the assumption of non-contexuality guarantees that these mappings are well defined, convex linear and satisfy (a)-(c).  However, regardless of how we choose to arrive at a pair of convex linear mappings satisfying (a)-(c), one (or more) of the assumptions we make will be false as shown by the following theorem.
\begin{theorem}\label{theorem:noclassical}
A classical representation of quantum theory does not exist.
\end{theorem}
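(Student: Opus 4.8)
The plan is to translate the three defining properties (a)--(c) of a classical representation into the language of operators, reducing the statement to the impossibility of a frame and its dual being \emph{simultaneously} positive, and then to extract a contradiction already from pure states in any Hilbert space of dimension at least two.

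First I would fix an arbitrary $\lambda\in\Lambda$ and observe that $\rho\mapsto\mu_\rho(\lambda)$ is an affine functional on $\den$. Since the affine hull of $\den$ is the trace-one hyperplane and $\den$ spans $\herm$, this functional extends uniquely to a linear functional on $\herm$, so $\mu_\rho(\lambda)=\Tr(\rho F(\lambda))$ for a unique $F(\lambda)\in\herm$. Property (a) now does two things: evaluating $\mu_\rho(\lambda)\geq 0$ on pure states $\rho=\op{\phi}{\phi}$ gives $\ip{\phi}{F(\lambda)\phi}\geq 0$, hence $F(\lambda)\geq 0$; and $\sum_\lambda\mu_\rho(\lambda)=1$ for all $\rho$ forces $\sum_\lambda F(\lambda)=\id$, so $\{F(\lambda)\}$ is a POVM. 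Running the same argument on $E\mapsto\xi_E(\lambda)$, with the constant term killed by the zero effect (whose probability must vanish by (c)), gives $\xi_E(\lambda)=\Tr(E\,G(\lambda))$; property (b) makes $G(\lambda)\geq 0$ with $\Tr G(\lambda)=1$, so each $G(\lambda)$ is itself a density operator. Property (c) then reads $\Tr(\rho E)=\sum_\lambda\Tr(\rho F(\lambda))\,\Tr(E\,G(\lambda))$ for all $\rho,E$, i.e. $\{F(\lambda)\}$ and $\{G(\lambda)\}$ form a dual pair reconstructing the trace pairing.

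Next I would exploit a single pure state. Setting $\rho=E=\op{\psi}{\psi}$ in (c) gives $1=\sum_\lambda a_\lambda g_\lambda$, where $a_\lambda:=\ip{\psi}{F(\lambda)\psi}\geq 0$ satisfies $\sum_\lambda a_\lambda=1$ (because $\sum_\lambda F(\lambda)=\id$) and $g_\lambda:=\ip{\psi}{G(\lambda)\psi}\in[0,1]$. A convex combination of numbers $\leq 1$ equals $1$ only if every active term equals $1$, so $g_\lambda=1$ for each $\lambda$ in the nonempty set $S_\psi:=\{\lambda:a_\lambda>0\}$. Since $G(\lambda)$ is a density operator, $\ip{\psi}{G(\lambda)\psi}=1=\Tr G(\lambda)$ forces $G(\lambda)=\op{\psi}{\psi}$ for all $\lambda\in S_\psi$.

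Finally I would close the argument. Fix any $\lambda\in S_\psi$, so $G(\lambda)=\op{\psi}{\psi}$. If some other ray $\phi$ obeyed $\ip{\phi}{F(\lambda)\phi}>0$, then $\lambda\in S_\phi$ and the previous step would also give $G(\lambda)=\op{\phi}{\phi}$, impossible unless $\phi=\psi$; hence $\ip{\phi}{F(\lambda)\phi}=0$ for every ray $\phi\neq\psi$. Because $F(\lambda)\geq 0$, vanishing of this quadratic form puts $\phi\in\ker F(\lambda)$, so choosing an orthonormal basis $\{\psi,e_2,\dots\}$ (here $\dim\mathcal H\geq 2$ is used) places both $e_2$ and $(\psi+e_2)/\sqrt{2}$ in $\ker F(\lambda)$, whence $\psi\in\ker F(\lambda)$ and $a_\lambda=\ip{\psi}{F(\lambda)\psi}=0$, contradicting $\lambda\in S_\psi$. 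I expect the genuine content to lie not in these elementary manipulations but in the reduction carried out in the first step: recognizing that (a)--(c) force $\{F(\lambda)\}$ to be a POVM while its reconstructing dual $\{G(\lambda)\}$ consists of states, so that \emph{both} frames are positive. It is exactly this simultaneous positivity that pure states forbid, and the same obstruction is what reappears later as the unavoidable negativity of every quasi-probability representation.
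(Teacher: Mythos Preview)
Your argument is correct. The reduction in your first step is precisely the content of Theorem~\ref{theorem:frame} in the paper: conditions (a)--(c) force $\mu_\rho(\lambda)=\Tr(\rho F(\lambda))$ and $\xi_E(\lambda)=\Tr(E\,G(\lambda))$ with $\{F(\lambda)\}$ a POVM and each $G(\lambda)$ a density operator, i.e.\ a frame and a dual frame that are \emph{both} positive. Your Steps~2--3 then supply a clean self-contained contradiction from pure states, which the paper does not actually write out---the paper merely cites the equivalence with Spekkens' non-contextual ontological models together with his no-go theorem, the frame-based proof of Ferrie--Emerson (2008), the ``direct and more intuitive'' proof of Ferrie--Emerson (2009), and the earlier Busch--Hellwig--Stulpe result. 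Your route is essentially the Ferrie--Emerson frame argument made explicit: you identify the dual pair, then exploit that $\Tr(|\psi\rangle\langle\psi|^2)=1$ saturates the convex bound to pin $G(\lambda)=|\psi\rangle\langle\psi|$ on the support of $F(\lambda)$, and finally use positivity of $F(\lambda)$ plus $\dim\mathcal H\ge 2$ to kill $F(\lambda)$ entirely.

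Two small remarks. First, your handling of the constant term in $\xi$ via the zero effect is fine but only forces $\xi_0(\lambda)=0$ on ``active'' $\lambda$'s (those with $F(\lambda)\neq 0$); since your contradiction lives entirely inside $S_\psi$, this is harmless, but it is worth saying. Second, the uniqueness claim ``extends uniquely to a linear functional on $\herm$'' is correct because $\den$ linearly spans $\herm$, so $\Tr(\rho F)=\Tr(\rho F')$ for all $\rho\in\den$ forces $F=F'$; you use this implicitly. Compared with the paper's citation-only treatment, your proof has the advantage of being elementary and fully contained in the operator language, making transparent the statement the paper records after Theorem~\ref{theorem:frame}: there is no frame of positive operators whose dual is also positive.
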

This theorem is implied by the equivalence of a classical representation and a non-contextual ontological model of quantum theory \cite{Spekkens2008Negativity} and the impossibility of the latter \cite{Spekkens2005Contextuality}.  This theorem was proven in reference \cite{Ferrie2008Frame} using the notion of a frame and its dual, which we will see later.  A direct and more intuitive proof was given in reference \cite{Ferrie2009Framed}.  The theorem is also implied by an earlier result on a related topic in reference \cite{Busch1993On}.

Now let us relax some assumptions in order to avoid a contradiction.  First, we will relax the assumption of non-contextuality.
\begin{definition}
A \emph{contextual representation of quantum theory} is a pair of mappings $\mu:\den\times C_{\mathcal P}\to P(\Lambda)$ and $\xi:\eff\times C_{\mathcal M}\to P(\Lambda)$ which satisfy, for all $\lambda\in\Lambda$, all $\rho\in\den$ and all $E\in\eff$, all $c_\mathcal P\in C_\mathcal P$, and all $c_\mathcal M\in C_\mathcal M$,
\begin{enumerate}[(a)]
\item $\mu_{\rho,c_\mathcal P}(\lambda)\in[0,1]$ and $\sum_\lambda \mu_{\rho,c_\mathcal P}(\lambda)=1$,
\item $\xi_{E,c_\mathcal M}(\lambda)\in[0,1]$ and $\xi_{\id,c_\mathcal M}(\lambda)=1$,
\item $\Tr(\rho E)=\sum_\lambda  \mu_{\rho,c_\mathcal P}(\lambda) \xi_{E,c_\mathcal M}(\lambda)$.
\end{enumerate}
Here $C_{\mathcal P}$ and $C_{\mathcal M}$ are the preparation and measurement \emph{contexts}.
\end{definition}

To the best of our knowledge such a construction does not exist in the literature.  In the previous section we looked at the Beltrametti-Bugajski model\footnote{These comments apply equally to other ontological models for pure states, such as the de Broglie-Bohm model.} which was non-contextual for pure states.  The model cannot be extended to include all states such that it remains non-contextual.  We could artificially create a \emph{preparation contextual} model by assigning a unique context to each convex decomposition of every density operator.  It is unclear whether such a construction could be useful.  It is more common, however, is to relax the assumption of non-negative probability rather than non-contextuality.
\begin{definition}\label{def:qp rep quant theory}
A \emph{quasi-probability representation of quantum theory} is a pair of affine mappings $\mu:\den\to L(\Lambda)$ and $\xi:\eff\to L(\Lambda)$ which satisfy, for all $\lambda\in\Lambda$, all $\rho\in\den$ and all $E\in\eff$,
\begin{enumerate}[(a)]
\item $\mu_\rho(\lambda)\in\mathbb R$ and $\sum_\lambda \mu_\rho(\lambda)=1$,
\item $\xi_E(\lambda)\in\mathbb R$ and $\xi_\id(\lambda)=1$,
\item $\Tr(\rho E)=\sum_\lambda  \mu_\rho(\lambda) \xi_E(\lambda)$.
\end{enumerate}
\end{definition}

It is immediately clear that theorem \ref{theorem:noclassical} is equivalent to the following ``negativity theorem'':
\begin{theorem}\label{theorem:neg}
A quasi-probability representation of quantum theory must have negativity in either its representation of states or measurements (or both).
\end{theorem}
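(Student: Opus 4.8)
The plan is to show that Theorem~\ref{theorem:neg} and Theorem~\ref{theorem:noclassical} are simply two ways of stating the same impossibility, so that the former follows at once from the latter. The key observation is that Definition~\ref{def:qp rep quant theory} (a quasi-probability representation) is obtained from Lemma~\ref{lemma_classical} (the defining conditions of a classical representation) by replacing the requirement $\mu_\rho(\lambda)\in[0,1]$ with the weaker $\mu_\rho(\lambda)\in\mathbb R$, and likewise $\xi_E(\lambda)\in[0,1]$ with $\xi_E(\lambda)\in\mathbb R$. Conditions (a) normalization, (b) the unit-effect normalization, and (c) the Born-rule reproduction are carried over verbatim. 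Thus a quasi-probability representation in which \emph{no} negativity appears, i.e.\ one whose $\mu_\rho$ and $\xi_E$ happen to take values in $[0,1]$, is precisely a classical representation in the sense of Lemma~\ref{lemma_classical}.

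First I would argue the contrapositive. Suppose, for contradiction, that there exists a quasi-probability representation $(\mu,\xi)$ with \emph{no} negativity in either its state representation or its measurement representation. By definition, ``no negativity'' means $\mu_\rho(\lambda)\ge 0$ for all $\rho\in\den$ and $\lambda\in\Lambda$, and $\xi_E(\lambda)\ge 0$ for all $E\in\eff$ and $\lambda\in\Lambda$. Combined with the normalization $\sum_\lambda\mu_\rho(\lambda)=1$ from condition (a), each $\mu_\rho$ is then a genuine probability distribution on $\Lambda$, so in particular $\mu_\rho(\lambda)\in[0,1]$. For the effects, non-negativity of $\xi_E$ together with the fact that for any $E\in\eff$ the complement $\id-E$ is also an effect, satisfying $\xi_E+\xi_{\id-E}=\xi_\id=1$ by affinity and condition (b), forces $\xi_E(\lambda)\le 1$, hence $\xi_E(\lambda)\in[0,1]$ as well.

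At this point the assumed representation satisfies exactly conditions (a)--(c) of Lemma~\ref{lemma_classical}, namely $\mu_\rho(\lambda)\in[0,1]$ with $\sum_\lambda\mu_\rho(\lambda)=1$, $\xi_E(\lambda)\in[0,1]$ with $\xi_\id(\lambda)=1$, and $\Tr(\rho E)=\sum_\lambda\mu_\rho(\lambda)\xi_E(\lambda)$. Therefore it is a classical representation of quantum theory. But Theorem~\ref{theorem:noclassical} asserts that no classical representation exists, a contradiction. Hence no nonnegative quasi-probability representation exists: every quasi-probability representation must exhibit negativity in its state representation, its measurement representation, or both.

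I do not anticipate a serious obstacle here, since the argument is essentially a repackaging of definitions once Theorem~\ref{theorem:noclassical} is granted. The only point requiring minor care is the derivation of the upper bound $\xi_E(\lambda)\le 1$: one must invoke the affine structure of $\xi$ and the closure of $\eff$ under $E\mapsto\id-E$ to convert mere non-negativity into membership in the full unit interval, rather than taking $\xi_E(\lambda)\le 1$ for granted. With that established, the equivalence with Lemma~\ref{lemma_classical} is exact, and the theorem follows immediately.
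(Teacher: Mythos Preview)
Your proposal is correct and is exactly the approach the paper takes: the paper simply asserts that Theorem~\ref{theorem:neg} is ``immediately clear'' as an equivalent reformulation of Theorem~\ref{theorem:noclassical}, and you have spelled out that equivalence by observing that a non-negative quasi-probability representation would satisfy the defining conditions of a classical representation in Lemma~\ref{lemma_classical}. Your added care in deriving the upper bound $\xi_E(\lambda)\le 1$ from affinity and closure of $\eff$ under $E\mapsto\id-E$ is a detail the paper leaves implicit.
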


There are many instances of mappings $\mu$ satisfying the first requirement.  In reference \cite{Ferrie2008Frame} it was shown by construction how to find a mapping $\xi$ which, together with $\mu$, satisfy all of them.  Most of the mappings $\mu$ are called \emph{phase space} functions as they conform to added mathematical structure not required in definition \ref{def:qp rep quant theory}.   A phase space representation is
then a particular type of quasi-probability representation which we formally define as follows:
\begin{definition}\label{def:phase_space_states}
If there exists a non-trivial symmetry group on $\Lambda$, $G$, carrying a
unitary representation $  U:G\to\unit$ and  a
quasi-probability representation satisfying the covariance property
$  U_g\rho  U_g^\dag\mapsto \{\mu_\rho(g(\lambda))\}_{\lambda\in\Lambda}$ for
all $\rho\in\den$ and $g\in G$, then $  \rho\mapsto \mu_\rho(\lambda)$ is a \emph{phase space
representation of quantum states}.
\end{definition}

It will turn out that all but a few of the representations reviewed below satisfy this stronger criterion.
\section{Quasi-probability in infinite dimensional Hilbert space\label{S:Infinite}}

Here we will review the quasi-probability distributions which have been defined for quantum states living in an infinite dimensional Hilbert space - the canonical example being a particle moving in one dimension.  Since there are a myriad of excellent reviews of the Wigner function and other phase space distributions \footnote{See reference \cite{Hillery1984Distribution} for a classic and reference \cite{Lee1995Theory} for a more recent review of phase space quasi-probability distributions.}, our discussion of them will be brief.  We will mainly focus on those details which have inspired analogous methods for finite dimensional Hilbert spaces.

First we start with the familiar Wigner function in section \ref{S(infinite):Wigner}.  The other phase space distributions, such as the Husimi function, are bundled up in section \ref{S(infinite):other}.

\subsection{Wigner phase space representation\label{S(infinite):Wigner}}

The position operator $Q$ and momentum operator $P$ are
the central objects in the abstract formalism of infinite
dimensional quantum theory.  The operators satisfy the canonical
commutation relations
\begin{equation*}
[  Q,   P]=i.\label{CCR}
\end{equation*}
We are looking for a joint probability distribution $\mu_\rho(p,q)$ for the state of the quantum system.  From the postulates of quantum mechanics we have a rule for calculating expectation values.  In particular, we can compute the characteristic function
\begin{equation*}\label{Wigner characteristic function}
\phi(\xi,\eta):=\av{e^{i(\xi q+\eta p)}}=\Tr(e^{i(\xi Q+\eta P)}\rho).
\end{equation*}
Since the characteristic function is just the Fourier transform of the joint probability distribution, we simply invert to obtain
\begin{equation}\label{Wigner Fourier invert of characteristic}
\mu_\rho(p,q)=\frac{1}{(2\pi)^2}\iint_{\mathbb R^2} \Tr(e^{i(\xi Q+\eta P)}\rho) e^{-i(\xi q +\eta p)}d\xi d\eta,
\end{equation}
which is the celebrated \emph{Wigner function} of $\rho$ \cite{Wigner1932On}.   The Wigner function is both positive and negative in general.  However, it otherwise behaves as a classical probability density on the classical phase space.  For these reasons, the Wigner function and others like it came to be called \emph{quasi-probability} functions.

The Wigner function is the unique representation satisfying the properties \cite{Bertrand1987A}
\begin{enumerate}[(a)]
\item[Wig(1)] For all $ \rho$, $\mu_{\rho}(q,p)$ is real.
\item[Wig(2)] For all $ \rho_1$ and $ \rho_2$, $$\Tr( \rho_1 \rho_2)=2\pi \int_{\mathbb R^2} dq dp\; \mu_{\rho_1}(q,p)\mu_{\rho_2}(q,p).$$
\item[Wig(3)] For all $ \rho$, integrating $\mu_{\rho}$ along the line $a  q+\ b p=c$ in phase space yields the probability that a measurement of the observable $a   Q+ b  P$ has the result $c$.
\end{enumerate}
We can write the Wigner function as
\begin{equation*}\label{Wigner as frame rep}
\mu_\rho(p,q)=\Tr\left[ F(p,q) \rho\right],
\end{equation*}
where
\begin{equation}
F(q,p):=\frac{1}{(2\pi)^2}\int_{\mathbb R^2} d\xi d\eta
\;e^{i\xi(Q-  q)+i\eta(P-  p)}.\label{Wigner_phasepoints}
\end{equation}
Thus the properties Wig(1)-(3) can be transformed into properties on a set of operators $  F(q,p)$ which uniquely specify the set in Equation \eqref{Wigner_phasepoints}.  These properties are
\begin{enumerate}[(a)]
\item[Wig(4)] $  F(q,p)$ is Hermitian.
\item[Wig(5)] $2\pi\Tr(F(q,p)F(q',p'))=\delta(q-q')\delta(p-p')$.
\item[Wig(6)] Let $P_c$ be the projector onto the eigenstate of $aQ+bP$ with eigenvalue $c$.  Then, $$\int_{\mathbb R^2} dq dp\; F(q,p)\delta(aq+bp-c)=P_c.$$
\end{enumerate}
These six properties are often the basis for generalizing the Wigner function to finite dimensional Hilbert spaces, as we will soon see.

\subsubsection{Applications: quantum teleportation}

The applications of the Wigner function are far reaching and not limited to to physics \footnote{For example, see reference \cite{Dragoman2005Applications} for a recent review of the applications of the Wigner function in signal processing.}.  A modern application can be found in reference \cite{Caves2004Classical} where Caves and W\'{o}dkiewicz use the Wigner function to obtain a hidden variable model of the continuous-variable teleportation protocol \cite{Vaidman1994Teleportation,Braunstein1998Teleportation}.  Later, in section \ref{S(finite):Cahill-Glauber}, we will discuss the much simpler discrete-variable teleportation protocol.  Here, then, we will avoid the details of the protocol and focus on the result.  It suffices to know the following: there are three quantum systems; the goal of the protocol is to transfer a quantum state from system 1 to system 3; the transfer is mediated through the special correlations between system 2 and system 3.  The success of the protocol is measured by the average \emph{fidelity}: a measure of the closeness of the initial state $\rho=\op\psi\psi$ of system 1 and the average final state $\rho_\textrm{out}$ of system 3.

Following Caves and W\'{o}dkiewicz we define $\nu=q+ip$ and index the Wigner function as $\mu_\rho(\nu)$.  This is convenient since the protocol is tailored to a quantum optical implementation where the outcomes of measurements are usually expressed as complex numbers.  Initially, the state of system 2 and 3 is described by the joint Wigner function $\mu_{2,3}(\alpha,\beta)$.

In terms of the Wigner functions, the average fidelity is
\[
\mathcal F=\pi\int d^2\nu d^2\beta \mu_\rho(\nu)\mu_{\rho_\textrm{out}}(\beta).
\]
This intuitively measures closeness by quantifying the overlap of the Wigner functions on the classical phase space.  The output state, determined by the details of the protocol, is
\[
\mu_{\rho_\textrm{out}}(\beta)=\int d^2\nu d^2\alpha\mu_{\rho}(\beta-\nu)\mu_{2,3}(\alpha,\nu-\conj\alpha).
\]
The initial Wigner function $\mu_\rho(\nu)$ and the joint Wigner function $\mu_{2,3}(\alpha,\beta)$ are determined by the particular implementation of the protocol.  The standard quantum optical implementation is done using coherent states of light.  It is easy to show that such states have positive Wigner functions \footnote{More difficult is to show that coherent states are the \emph{only} states with positive Wigner functions \cite{Hudson1974When}.}.  Thus, the Wigner function provides a classical phase space picture of the entire protocol.

A first step toward performing an experiment requiring genuine quantum resources might be to avoid the above classical description by teleporting a non-coherent quantum state.  Caves and W\'{o}dkiewicz have devised a classical explanation for this case as well.  The new model involves a randomization procedure which transforms the initial non-coherent state into a coherent one thus giving it a positive Wigner function.  However, it can be shown that within such a model, the fidelity is bounded: $\mathcal F<2/3$.  So $2/3$ emerges as a ``gold-standard'' since teleporting a non-coherent state with fidelity $\mathcal F\geq 2/3$ avoids this classical phase space description.

\subsection{Other phase space representations\label{S(infinite):other}}

Another class of solutions to the ordering problem is the association
$e^{i\xi q+i\eta p}\mapsto e^{i\xi   Q+i\eta   P}f(\xi,\eta)$ for some
arbitrary function $f$ \footnote{This is very closely related to the $s$-ordered Cahill-Glauber formalism \cite{Cahill1969Density}.  See Table 1 of \cite{Lee1995Theory} for a concise review of the traditional choices for $f$.}.

Consider again the classical particle phase space $\mathbb R^2$ and the continuous set of operators
\begin{equation}
  F(q,p):=\frac{1}{(2\pi)^2}\int_{\mathbb R^2} d\xi d\eta
\;e^{i\xi(q-  Q)+i\eta(p-  P)}f(\xi,\eta).\label{Wigner_phasepoints}
\end{equation}
The $f$ dependent distributions
\begin{equation}\label{f ordered distribution}
\mu^f_\rho(q,p):=\Tr( \rho  F(q,p))
\end{equation}
define quasi-probability functions on phase space alternative to the Wigner function, which is simply the $f=1$ special case of this more general formalism.

Besides the Wigner function, the most popular choices of $f$ are
\[
f(\xi,\eta)=e^{\pm\frac14(\xi^2+\eta^2)},
\]
which give, via equation \eqref{f ordered distribution}, the Glauber-Sudarshan \cite{Glauber1963Coherent,Sudarshan1963Equivalence} and Husimi \cite{Husimi1940Some} functions, respectively.  These two mappings are sometimes referred to as the P- and Q-representations (not to be confused with position and momentum representations).  We will follow the usual convention by introducing the annihilation operator
\begin{equation}
a=\frac{1}{\sqrt2}(Q+iP)\label{annihilation}
\end{equation}
and the \emph{coherent states} defined via $a\ket\alpha=\alpha\ket\alpha$, where we write $\alpha=q+ip$.  Then the Husimi function can be conveniently written
\begin{equation}\label{Husimi function}
Q(\alpha):=\mu^f_\rho(q,p)=\frac{1}{\pi}\ip{\alpha}{\rho \alpha}.
\end{equation}
The Glauber-Sudarshan function $\rho\to P(\alpha)$ can be expressed implicitly through the identity
\begin{equation}\label{Glauber-Sudarshan function}
\rho=\int d^2\alpha P(\alpha) \op{\alpha}{\alpha},
\end{equation}
where $d^2\alpha=(1/2)dqdp$.  Notice that this immediate implies the following \emph{duality} condition between the P- and Q-representation:
\begin{equation}\label{P-Q duality}
\Tr(\rho\rho')=\int d^2\alpha P_\rho(\alpha) Q_{\rho'}(\alpha).
\end{equation}
These functions and further discussed in section \ref{S:negativity_ss}.

\subsubsection{Application: quantumness witness}

Here we will discuss a more recent application of the P and Q functions of interest in quantum information and foundations \footnote{The P and Q functions are powerful visualization tools prominently used in the areas of quantum optics and quantum chaos \cite{Takahashi1989Distribution}.  See \cite{Schleich2001Quantum} for an overview of the applications in quantum optics.}.  We will concern ourself with a particular notion of ``non-classicality'' defined in reference \cite{Korbicz2005Hilberts,Alicki2008Quantumness,Alicki2008A}.

Consider two observables represented by the self-adjoint operators $R$ and $S$.  Another observable $W(R,S)$ written as an ordered power series of $R$ and $S$ is a \emph{quantumness witness} if it possesses at least one negative eigenvalue and the function $w(r,s)$ obtained by replacing $R$ and $S$ with its spectral elements is positive: $w(r,s)\geq0$ for all $r\in\spec(R)$ and $s\in\spec(S)$ ($\spec(\cdot)$ denotes the spectrum of the observable).

Now consider $R=Q$ and $S=P$, the usual position and momentum operators.  Recalling the parameterization in terms of the annihilation operator in equation \eqref{annihilation}, we define
\[
W(a):=\sum_{m,n} c_{mn} (a^\dag)^ma^n,
\]
for any choice of coefficients $c_{mn}$, such that
\[
w(\alpha)=\sum_{m,n} c_{mn} \conj \alpha^m\alpha^n\geq0
\]
and $W$ possesses at least one negative eigenvalue (an example of such a construction is $W(a)=(a^\dag)^2a^2-2ma^\dag a+m^2$ for $m\geq1$).  Then, in terms of the Q and P-representation defined above, we have
\begin{equation}\label{quantumness witness}
\av W =\Tr(\rho W)=\int d^2\alpha \ip{\alpha, W\alpha} P(\alpha)=\int d^2\alpha w(\alpha) P(\alpha).
\end{equation}
In optics especially, coherent states are considered classical.  From equation \eqref{Glauber-Sudarshan function} we see that if $P(\alpha)\geq0$, the quantum state is a statistical mixture of coherent states and hence just as classical.  So if $\rho$ is a classical state, equation \eqref{quantumness witness} tells us $\av W\geq0$.  Therefore, if we measure $\av W\leq0$, we can rule out the classical model of statistical mixtures of coherent states; we can say $W(a)$ detects the quantumness of the states.

\section{Quasi-probability in finite dimensional Hilbert space\label{S:Finite}}

Nearly all definitions of quasi-probability distributions for finite dimensional Hilbert spaces have been motivated by the Wigner function.  The earliest such effort was by Stratonovich and is reviewed in section \ref{S(finite):Sphere}.  The Stratonovich phase space is a sphere and hence continuous.  Later, many authors have define Wigner function analogs on \emph{discrete} phase spaces.  A sampling, with a bias towards those which have found application in quantum information theory, is given in sections \ref{S(finite):Wootters}-\ref{S(finite):Cahill-Glauber}.

There also exist quasi-probability distributions which where introduced to solve various problems far removed from proposing a finite dimensional analog of the Wigner function.  Sections \ref{S(finite):tables}-\ref{S(finite):SIC} review those quasi-probability distributions which do not have a canonical phase space structure and hence form a somewhat weaker analogy to the Wigner function.

We note that there exists many other quasi-probability distributions defined on discrete phase spaces which are not reviewed here \cite{Hannay1980Quantization,Cohen1986Joint,Feynman1987Negative,Galetti1988An,Opatrny1996Parametrized,Luis1998Discrete,Hakioglu1998Finitedimensional,Rivas1999The,Mukunda2004Wigner,Vourdas2004Quantum,Chaturvedi2005Wigner,Chaturvedi2006WignerndashWeyl,Gross2006Hudsons,Gross2007Nonnegative}.

\subsection{Spherical phase space\label{S(finite):Sphere}}

Here we will be concerned with a set of postulates put forth by Stratonovich \cite{Stratonovich1957On}.   The aim of Stratonovich was to find a Wigner function type mapping, analogous to that of a infinite dimensional system on $\mathbb R^2$, of a $d$ dimensional system on the sphere $\mathbb S^2$.  The first postulate is linearity and is always satisfied if the Wigner functions on the sphere satisfy
\begin{equation}
\mu_\rho(\mathbf{n})=\Tr( \rho  \triangle(\mathbf{n})),\label{DWF_sphere}
\end{equation}
where $\mathbf{n}$ is a point on $\mathbb S^2$.  The remaining postulates on this quasi-probability mapping are
\begin{align}
&\mu_\rho(\mathbf{n})^\ast=\mu_\rho(\mathbf{n}),\nonumber\\
&\frac{d}{4\pi}\int_{\mathbb S^2} d\mathbf{n} \;\mu_\rho(\mathbf{n})=1,\nonumber\\
&\frac{d}{4\pi}\int_{\mathbb S^2} d\mathbf{n} \;\mu_{\rho_1}(\mathbf{n})\mu_{\rho_2}(\mathbf{n})=\Tr( \rho_1 \rho_2),\nonumber\\
&\mu_{(g\cdot\rho)}(\mathbf{n})=\mu_\rho(\mathbf{n})^g,\;g\in\textrm{SU}(2),\nonumber
\end{align}
where $g\cdot\rho$ is the image of $  U_g \rho  U_g^\dag$ and $  U:\textrm{SU}(2)\to \mathbb{U}{\mathcal H}$ is an irreducible unitary representation of the group $\textrm{SU}(2)$.  These postulates are analogous to Wig(1)-(3) for the Wigner function modulo the second normalization condition (which could have be included in the Wigner function properties).

The continuous set of operators $ \triangle(\mathbf{n})$ is called a \emph{kernel} and we note it plays the role of the more familiar \emph{phase space point operators} in the latter.  Requiring that Equation \eqref{DWF_sphere} hold changes the postulates to new conditions on the kernel
\begin{align}
& \triangle(\mathbf{n})^\dag= \triangle(\mathbf{n}),\label{kernel_1}\\
&\frac{d}{4\pi}\int_{\mathbb S^2} d\mathbf{n} \; \triangle(\mathbf{n})= \id,\label{kernel_2}\\
&\frac{d}{4\pi}\int_{\mathbb S^2} d\mathbf{n} \;\Tr( \triangle(\mathbf{n}) \triangle(\mathbf{m})) \triangle(\mathbf{n})= \triangle(\mathbf{m}),\label{kernel_3}\\
& \triangle(g\cdot\mathbf{n})=  U_g \triangle(\mathbf{n})  U_g^\dag,\;g\in\textrm{SU}(2).\label{kernel_4}
\end{align}
These postulates are the spherical analogies of properties Wig(4)-(6) (again, modulo the normalization condition).  Defining $s=\frac{d-1}{2}$ as the \emph{spin}, 
Heiss and Weigert \cite{Heiss2000Discrete} provided a concise derivation of $2^{2s}$ unique kernels satisfying these postulates \footnote{This was also shown earlier \cite{Varilly1989The, Brif1999Phasespace} - see also references \cite{Arecchi1972Atomic, Scully1994Spin}.}.  They are
\begin{equation}\label{cont_kernel}
 \triangle(\mathbf{n})=\sum_{m=-s}^{s}\sum_{l=0}^{2s}\epsilon_l \frac{2l+1}{2s+1}C^{s\;l\;s}_{m\;0\;m} \phi_m(\mathbf{n})\phi_m^\ast(\mathbf{n}),
\end{equation}
where $C$ denotes the Clebsch-Gordon coefficients. Here, $\phi_m(\mathbf{n})$ are the eigenvectors of the operator $\mathbf{  S}\cdot\mathbf{n}$, where $\mathbf{  S}=(  X,  Y,  Z)$; and $\epsilon_l=\pm1$, for $l=1\ldots 2s$ and $\epsilon_0=1$.

Heiss and Weigert relax the postulates Equations \eqref{kernel_1}-\eqref{kernel_4} on the kernel $ \triangle(\mathbf{n})$ to allow for a pair of kernels $ \triangle^\mathbf{n}$ and $ \triangle_\mathbf{m}$.  The pair individually satisfy Equation \eqref{kernel_1}, while one of them satisfies Equation \eqref{kernel_2} and the other Equation \eqref{kernel_4}.  Together, the pair must satisfy the generalization of Equation \eqref{kernel_3}
\begin{equation}
\frac{d}{4\pi}\int_{\mathbb S^2} d\mathbf{n}\Tr( \triangle^\mathbf{n} \triangle_\mathbf{m}) \triangle^\mathbf{n}= \triangle_\mathbf{m}.\label{kernel_dual}
\end{equation}
A pair of kernels, together satisfying Equation \eqref{kernel_dual}, is given by
\begin{align*}
 \triangle_\mathbf{n}&=\sum_{m=-s}^{s}\sum_{l=0}^{2s}\gamma_l \frac{2l+1}{2s+1}C^{s\;l\;s}_{m\;0\;m} \phi_m(\mathbf{n})\phi_m^\ast(\mathbf{n}),\\
 \triangle^\mathbf{n}&=\sum_{m=-s}^{s}\sum_{l=0}^{2s}\gamma_l^{-1} \frac{2l+1}{2s+1}C^{s\;l\;s}_{m\;0\;m} \phi_m(\mathbf{n})\phi_m^\ast(\mathbf{n}),
\end{align*}
where each $\gamma_l$ is a finite non-zero real number and $\gamma_0=1$.  The original postulates are satisfied when $\gamma_l=\gamma_l^{-1}\equiv\epsilon_l$.

The major contribution of reference \cite{Heiss2000Discrete} is the derivation of a \emph{discrete} kernel $ \triangle_\nu:= \triangle_{\mathbf{n}_\nu}$, for $\nu=1\ldots d^2$ which satisfies the discretized postulates
\begin{align}
& \triangle_\nu^\dag= \triangle_\nu,\label{discrete_kernel_1}\\
&\frac{1}{d}\sum_{\nu=1}^{d^2}  \triangle^\nu= \id,\label{discrete_kernel_2}\\
&\frac{1}{d}\sum_{\nu=1}^{d^2} \Tr( \triangle_\nu \triangle^\mu) \triangle_\nu= \triangle^\mu,\label{discrete_kernel_3}\\
& \triangle_{g\cdot \nu}=  U_g \triangle_\nu  U_g^\dag,\;g\in\textrm{SU}(2).\label{discrete_kernel_4}
\end{align}
The subset of points $\mathbf{n}_\nu$ is called a \emph{constellation}.  The linearity postulate is not explicitly stated since it is always satisfied under the assumption
\begin{equation}
 \rho\to\mu_\rho(\nu)=\Tr( \rho  \triangle_\nu).\label{DWF_constellation}
\end{equation}
Equation \eqref{discrete_kernel_3} is called a \emph{duality} condition.  That is, it is only satisfied if $ \triangle_\nu$ and $ \triangle^\mu$ are \emph{dual bases} for $\herm$.  In particular,
\begin{equation*}
\frac{1}{d}\Tr( \triangle_\nu \triangle^\mu)=\delta_{\nu\mu}.
\end{equation*}
Although the explicit construction of a pair of discrete kernels satisfying Equations \eqref{discrete_kernel_1}-\eqref{discrete_kernel_4} might be computationally hard, their existence is a trivial exercise in linear algebra.  Indeed, so long as $ \triangle_\nu$ is a basis for $\herm$, its dual, $ \triangle^\mu$, is uniquely determined by
\begin{equation*}
\triangle^\mu=\sum_{\nu=1}^{d^2} \mathtt{G}^{-1}_{\nu\mu} \triangle_\nu,
\end{equation*}
where the Gram matrix $\mathtt G$ is given by
\begin{equation*}
\mathtt G_{\nu\mu}=\Tr( \triangle_\nu \triangle_\mu).
\end{equation*}
The authors of reference \cite{Heiss2000Discrete} note that almost any constellation leads to a discrete kernel $ \triangle_\nu$ forming a basis for $\herm$.  The term \emph{almost any} here means that a randomly selected discrete kernel will form, with probability 1, a basis for $\herm$.

\subsubsection{Application: NMR quantum computation}

The spherical quasi-probability functions for qubit systems ($d=2$) were put to use by Schack and Caves for the purpose of obtaining a classical model of nuclear magnetic resonance (NMR) experiments designed to perform quantum information tasks \cite{Schack1999Classical}.  For a single qubit we choose the kernels
\begin{align*}
\triangle_\mathbf{n}&=\frac{1}{2}(\id+\mathbf{n}\cdot\mathbf{\sigma}),\\
\triangle^\mathbf{n}&=\frac{1}{4\pi}(\id+3\mathbf{n}\cdot\mathbf{\sigma}),
\end{align*}
where $\mathbf{\sigma}=(X,Y,Z)$ are the usual Pauli operators.  In NMR experiments many qubits are employed to perform quantum information tasks such as error correction and teleportation.  Suppose there are $n$ qubits with total Hilbert space dimension $2^n$.  We choose an $n$-fold tensor product of the qubit kernels.  Explicitly, they are
\begin{align}
\triangle_\mathbf{n}&=\frac{1}{2^n}\bigotimes_{j=1}^n(\id+\mathbf{n}\cdot\mathbf{\sigma}),\label{qubit kernel 1}\\
\triangle^\mathbf{n}&=\frac{1}{(4\pi)^n}\bigotimes_{j=1}^n(\id+3\mathbf{n}\cdot\mathbf{\sigma}).\label{qubit kernel 2}
\end{align}
The quasi-probability function is given by
\[
\mu_\rho(\mathbf{n})=\Tr(\rho \triangle^\mathbf{n}).
\]
As expected, in general, this function is both positive and negative.

The quantum state of an NMR experiment is of the form
\begin{equation}\label{NMR state}
\rho=(1-\epsilon)\frac{1}{2^n}\id+\epsilon\rho_1,
\end{equation}
where $\rho_1$ is arbitrary but often chosen to be a specific pure state.  The parameter scales as
\[
\epsilon\propto \frac{n}{2^n}.
\]
So we have
\[
\mu_\rho(\mathbf{n})=\frac{1-\epsilon}{(4\pi)^n}+\epsilon\mu_{\rho_1}(\mathbf{n}).
\]
It is easy to determine the lower bound
\[
\mu_\rho(\mathbf{n})\geq \frac{1-\epsilon}{(4\pi)^n}-\frac{\epsilon 2^{2n-1}}{(4\pi)^n}.
\]
Thus, provided
\[
\epsilon\leq \frac{1}{1+2^{2n-1}},
\]
$\mu_\rho(\mathbf{n})\geq0$ and we have a representation of NMR quantum states in terms of classical probability distributions on a classical phase space.  Note however, that this definition of ``classical'' omits the reasonable requirement that it also be \emph{efficient}.  Indeed, the authors note that, for typical experimental values of the scaling parameter in $\epsilon$, such a classical representation is valid for $n<16$ qubits.

The spherical phase space representations have also been put to good use in visualizing decoherence \cite{Ghose2008Chaos} and photon squeezing \cite{Shalm2009Squeezing}.

\subsection{Wootters discrete phase space representation\label{S(finite):Wootters}}

In reference \cite{Wootters1987A}, Wootters defined a discrete
analog of the Wigner function.  Associated with each Hilbert space $\mathcal H$ of finite dimension $d$ is a
\emph{discrete phase space}.  First assume $d$ is
prime.  The \emph{prime phase space}, $\Phi$, is
a $d\times d$ array of points $\alpha=(q,p)\in \mathbb
Z_d\times\mathbb Z_d$.

A \emph{line}, $\lambda$, is the set of
$d$ points satisfying the linear equation $aq+bp=c$, where all
arithmetic is modulo $d$.  Two lines are \emph{parallel} if their
linear equations differ in the value of $c$.  The prime phase space $\Phi$ contains $d+1$ sets of $d$ parallel
lines called \emph{striations}.

Assume the the Hilbert space $\mathcal H$ has composite dimension
$d=d_1d_2\cdots d_k$.  The discrete phase space of the entire $d$
dimensional system is the Cartesian product of two-dimensional prime
phase spaces of the subsystems. The phase space is thus a $(d_1\times
d_1) \times (d_2\times d_2)\times\cdots(\times d_k\times d_k)$ array.
Such a construction is formalized as follows: the
\emph{discrete phase space} is the multi-dimensional array
$\Phi=\Phi_{1}\times\Phi_{2}\times\cdots\times\Phi_{k}$,
where each $\Phi_{i}$ is a prime phase space.  A \emph{point} is
the $k$-tuple $\alpha=(\alpha_1,\alpha_2, \ldots, \alpha_k)$ of
points $\alpha_i=(q_i,p_i)$ in the prime phase spaces.  A
\emph{line} is the $k$-tuple
$\lambda=(\lambda_1,\lambda_2,\ldots,\lambda_k)$ of lines in the
prime phase spaces.  That is, a line is the set of $d$ points
satisfying the equation
$$(a_1q_1+b_1p_1,a_2q_2+b_2p_2,\ldots,a_kq_k+b_kp_k)=(c_1,c_2,\ldots,c_k),$$
which is symbolically written $aq+bp=c$.  Two lines are
\emph{parallel} if their equations differ in the value
$c$.  As was the case for the prime phase spaces, parallel lines can be partitioned into sets, again called striations; the
discrete phase space $\Phi$ contains $(d_1+1)(d_2+1)\cdots(d_k+1)$ sets of $d$
parallel lines.

The construction of the discrete phase space is now been complete.
To introduce Hilbert space into the discrete phase space formalism,
Wootters chooses the following special basis for the space of Hermitian operators.  The set of operators $\{  A_\alpha:\alpha\in\Phi\}$ acting on an $d$ dimensional
Hilbert space are called \emph{phase point operators} if the operators satisfy
\begin{enumerate}[(a)]
\item[Woo(4)] For each point $\alpha$, $A_\alpha$ is Hermitian.
\item[Woo(5)] For any two points $\alpha$ and $\beta$, $\Tr(  A_\alpha  A_\beta)=d\delta_{\alpha\beta}$.
\item[Woo(6)] For each line $\lambda$ in a given striation, the operators
$  P_\lambda=\frac{1}{d}\displaystyle\sum_{\alpha\in\lambda}
A_\alpha$ form a projective valued measurement (PVM): a set of $d$
orthogonal projectors which sum to identity.
\end{enumerate}
Notice that these properties of the phase point operators Woo(4)-(6) are discrete analogs of the properties Wig(4)-(6) of the function $F$ defining the original Wigner function.  This definition suggests that the lines in the discrete phase
space should be labeled with states of the Hilbert space.  Since each striation is
associated with a PVM, each of the $d$ lines in a striation is
labeled with an orthogonal state.  For each $\Phi$, there is a unique set of phase point operators up
to unitary equivalence.

Although the sets of phase point operators are unitarily equivalent,
the induced labeling of the lines associated to the chosen set of
phase point operators are not equivalent.  This is clear from the
fact that unitarily equivalent PVMs do not project onto the same
basis.

The choice of phase point operators in reference \cite{Wootters1987A} will be adopted.
For $d$ prime, the phase point operators are
\begin{equation}
  A_\alpha=\frac{1}{d}\sum_{j,m=0}^{d-1}
\omega^{pj-qm+\frac{jm}{2}}  X^j  Z^m,\label{Wooters_phasepoint_prime}
\end{equation}
where $\omega$ is a $d$'th root of unity and $  X$ and $  Z$ are the generalized Pauli operators (see appendix \ref{appendix:Mathematical primer}). For
composite $d$, the phase point operator in $\Phi$ associated with
the point $\alpha=(\alpha_1,\alpha_2,\ldots,\alpha_k)$ is given by
\begin{equation}
  A_\alpha=  A_{\alpha_1}\otimes
A_{\alpha_2}\otimes\cdots\otimes
A_{\alpha_k},\label{Wootters_phasepoint_composite}
\end{equation}
where each $  A_{\alpha_i}$ is the phase point operator of
the point $\alpha_i$ in $\Phi_{i}$.

The $d^2$ phase point operators are linearly independent and form a
basis for the space of Hermitian operators acting on an $d$ dimensional
Hilbert space.  Thus, any density operator $  \rho$ can be
decomposed as
\[
  \rho=\sum_{q,p} \mu_\rho(q,p)  A(q,p),
\]
where the real coefficients are explicitly given by
\begin{equation}
\mu_\rho(q,p)=\frac{1}{d}\Tr(  \rho  A(q,p)).\label{rho2DWF}
\end{equation}
This discrete phase space function
is the Wootters \emph{discrete Wigner function}.  This discrete quasi-probability function satisfies the following properties which are the discrete analogies of the properties Wig(1)-(3) the original continuous Wigner function satisfies.
\begin{enumerate}[(a)]
\item[Woo(1)] For all $ \rho$, $\mu_{\rho}(q,p)$ is real.
\item[Woo(2)] For all $ \rho_1$ and $ \rho_2$, $$\Tr( \rho_1 \rho_2)= d \sum_{q,p} \mu_{\rho_1}(q,p)\mu_{\rho_2}(q,p).$$
\item[Woo(3)] For all $ \rho$, summing $\mu_{\rho}$ along the line $\lambda$ in phase space yields the probability that a measurement of the PVM associated with the striation which contains $\lambda$ has the outcome associated with $\lambda$.
\end{enumerate}

\subsubsection{Application: entanglement characterization}

In \cite{Franco2006Discrete}, Franco and Penna relate the negativity of Wootter's discrete Wigner function to entanglement.  Recall that a bipartite density matrix $\rho$ is \emph{separable} if it can be written as a convex combination of the form $$\rho=\sum_k p_k  \rho^{(1)}_k\otimes \rho^{(2)}_k,$$ for all $k$, where $ \rho^{(1)}_k$ and $\rho^{(2)}_k$ are states on the individual subsystems.  Let $\Phi_1$ and $\Phi_2$ be the DPS associated with $ \rho^{(1)}_k$ and $\rho^{(2)}_k$, respectively.

The Wootters representation of a density matrix of the form $\rho=\rho^{(1)}\otimes \rho^{(2)}$ is given by $\mu_\rho(\alpha)=\mu_\rho^{(1)}(\alpha_1)\mu_\rho^{(2)}(\alpha_2)$, where $\alpha=(\alpha_1,\alpha_2)\in\Phi_1\times\Phi_2$.
This can be shown as follows:
\begin{align*}
\mu_\rho(\alpha_1,\alpha_2)&=\frac{1}{d}\Tr( \rho^{(1)} A_{\alpha_1}\otimes \rho^{(2)} A_{\alpha_2})\\
&=\frac{1}{d}\sum_{\beta_1\in\Phi_1,\beta_2\in\Phi_2}\mu_\rho^{(1)}(\beta_1)\mu_\rho^{(2)}(\beta_2)\Tr( A_{\beta_1} A_{\alpha_1}\otimes A_{\beta_2} A_{\alpha_2})\\
&=\frac{1}{d}\sum_{\beta_1\in\Phi_1,\beta_2\in\Phi_2}\mu_\rho^{(1)}(\beta_1)\mu_\rho^{(2)}(\beta_2) d\delta_{\beta_1\alpha_1}\delta_{\beta_2\alpha_2}\\
&=\mu_\rho^{(1)}(\alpha_1)\mu_\rho^{(2)}(\alpha_2).
\end{align*}

Thus, separability can be recast entirely in terms of the discrete phase space.  That is, a discrete Wigner function is \emph{separable} if it can be written
\begin{equation}
\mu_\rho(\alpha)=\sum_k p_k\mu_\rho^{(1)}(\alpha_1)_k\mu_\rho^{(2)}(\alpha_2)_k,\label{separable_DWF}
\end{equation}
else it is \emph{entangled}.

The two qubit product state $\mu_\rho(\alpha)=\mu_\rho^{(1)}(\alpha_1)\mu_\rho^{(2)}(\alpha_2)$ with $\mu_\rho^{(1)}(\alpha_1)=\frac{1}{2}$ for some $\alpha_1$ and $\mu_\rho^{(2)}(\alpha_2)=\frac{1-\sqrt 3}{4}$ for some $\alpha_2$ will have the most negative value for a separable state, namely $\frac{1-\sqrt 3}{8}$.  Thus, if a two qubit Wigner function has a value strictly less that $\frac{1-\sqrt 3}{8}$, it is entangled.  Since entanglement is considered non-classical, negativity of the Wigner function (below some threshold) is associated with non-classicality.  However, even if a Wigner function is positive on all of phase space, it can still be entangled.  Therefore, Franco and Penna have found a new sufficient condition for entanglement in two qubits.

For a necessary condition, the authors of \cite{Franco2006Discrete} turn to the positive partial transpose condition \cite{Peres1996Separability, Horodecki1997Separability}.  The result is a two qubit state $\rho$ is separable if and only if both the discrete Wigner function of $\rho$ and the discrete Wigner function of $\rho^{\mathrm{T}_2}$ (the partial transpose) are non-negative everywhere on the discrete phase space.

Wootters discrete Wigner function has also found application in quantum teleportation \cite{Koniorczyk2001Wignerfunction}.  The authors have found the discrete phase space representation of the teleportation protocol much clearer especially when considering quantum systems with much larger than two dimensional Hilbert spaces.

\subsection{Extended discrete phase space\label{S(finite):Extended}}

In reference \cite{Cohendet1988A}, Cohendet \emph{et al} define a discrete analogue of the Wigner function which is valid for integer spin \footnote{This difficulty was overcome in a later paper \cite{Cohendet1990FokkerPlanck}.}.  That is, $\dim(\mathcal H)=d$ is assumed to be odd.  Whereas Wootters builds up a discrete phase space before defining a Wigner function, the authors of \cite{Cohendet1988A} implicitly define a discrete phase space through the definition of their Wigner function.

Consider the operators
\begin{equation*}
 W_{mn}\phi_k=\omega^{2n(k-m)}\phi_{k-2m},
\end{equation*}
with $m,n\in\mathbb Z_d$ and ${\phi_k}$ are the eigenvectors of $Z$.  Then, the \emph{discrete Wigner function} of a density operator $ \rho$ is
\begin{equation}\label{DWF_odd}
\mu_\rho^{\textrm{odd}}(q,p)=\frac{1}{d}\Tr( \rho  W_{qp}  P),
\end{equation}
where $  P$ is the parity operator.

The authors call the operators $ \triangle_{qp}=  W_{qp}  P$ \emph{Fano operators} and note that they satisfy
\begin{align*}
 \triangle_{qp}^\dag&= \triangle_{qp},\\
\Tr( \triangle_{qp} \triangle_{q'p'})&=d\delta_{qq'}\delta_{pp'},\\
  W_{xk}^\dag \triangle_{qp}  W_{xk}&= \triangle_{q-2x\;p-2k}.
\end{align*}
The Fano operators play a role similar to Wootters' phase point operators; they form a complete basis of the space of Hermitian operators.  The phase space implicitly defined through the definition of the discrete Wigner function \eqref{DWF_odd} is $\mathbb Z_d\times\mathbb Z_d$.  When $d$ is an odd prime, this phase space is equivalent to Wootters discrete phase space.  In this case the Fano operators are $ \triangle_{qp}=A_{(-q,p)}$.  This can seen by writing the Wootters phase point operators as
\begin{equation*}
A_{(q,p)}=\frac{1}{d}  X^{2q}  Z^{2p}  P \omega^{2qp}.
\end{equation*}

Let $\sigma\in\{\pm1\}$.  The \emph{extended} phase space is $\mathbb Z_d\times\mathbb Z_d\times\{\pm1\}$.  Define the new Wigner function
\[
\mu_\rho(q,p,\sigma)=\frac{1}{4d}\left(\frac{2}{d}+\sigma\mu_\rho^{\textrm{odd}}(q,p)\right).
\]
This function is satisfies the positivity and normalization requirements of a true probability distribution.

\subsubsection{Application: Master equation for an integer spin}

In the same paper, Cohendet \emph{et al} show the quantum dynamical equation of motion can be represented in the extended phase space as a classical stochastic process.  This is achieved by showing the time evolution of the discrete Wigner function is
\[
\frac{\partial}{\partial t}\mu_\rho(q,p,\sigma;t)=\sum_{q',p',\sigma'}A(q,p,\sigma|q',p',\sigma')\mu_\rho(q',p',\sigma';t),
\]
for a suitable choice of jump moments $A$.  This is in the form of the master equation of a Markov process.  The authors interpret this result as follows: ``Quantum mechanics of an integer spin appears as the mixture of two classical schemes of a spin.  However at random times the schemes are exchanged.''

\subsection{Even dimensional discrete Wigner functions\label{S(finite):Even}}

In reference \cite{Leonhardt1995QuantumState}, Leonhardt defines discrete analogues of the Wigner function for both odd and even dimensional Hilbert spaces.  In a later paper \cite{Leonhardt1996Discrete}, Leonhardt discusses the need for separate definitions for the odd and even dimension cases.   Naively applying his definition, or that of Cohendet \emph{et al}, of the discrete Wigner function for odd dimensions to even dimensions yields unsatisfactory results.  The reason for this is the discrete Wigner function carries redundant information for even dimensions which is insufficient to specify the state uniquely.  The solution is to enlarge the phase space until the information in the phase space function becomes sufficient to specify the state uniquely.

Suppose $\dim(\mathcal H)=d$ is odd.  Leonhardt defines the discrete Wigner function as
\begin{equation*}
\mu_\rho^{\textrm{Leo}}(q,p)=\frac{1}{d}\Tr( \rho  X^{2q}  Z^{2p}   P \omega^{2qp}).
\end{equation*}
Leonhardt's definition of an odd dimensional discrete Wigner function is unitarily equivalent to the Cohendet \emph{et al} definition.  That is, $\mu_\rho^{\textrm{Leo}}(q,p)=\mu_\rho^{\textrm{odd}}(-q,p)$.  To define a discrete Wigner function for even dimensions, Leonhardt takes half-integer values of $q$ and $p$.  This amounts to enlarging the phase space to $\mathbb Z_{2d}\times\mathbb Z_{2d}$.  Thus the \emph{even dimensional} discrete Wigner function is
\begin{equation*}
\mu_\rho^{\textrm{even}}(q,p)= \frac{1}{2d}\Tr( \rho  X^{q}  Z^{p}   P \omega^{\frac{qp}{2}}),
\end{equation*}
where the operators
\[
\triangle^{\textrm{even}}_{qp}=\frac{1}{2d}  X^{q}  Z^{p}   P \omega^{\frac{qp}{2}}
\]
could be called the even dimensional Fano or phase point operators.  Of course, these operators do not satisfy all the criteria which the Fano operators (in the case of Cohendet \emph{et al}) or the phase point operators (in the case of Wootters) satisfy; they are not orthogonal, for example.  Moreover, they are not even linearly independent which can easily be inferred since there are $4d^2$ of them and a set of linearly independent operators contains a maximum of $d^2$ operators.

\subsubsection{Application: quantum computation}

Leonhardt's discrete Wigner function has been used to visualize and gain insights for algorithms expected to be performed on a quantum computer \cite{Bianucci2002Discrete, Miquel2002Quantum, Miquel2002Interpretation}.  For each step in a quantum algorithm the state $\rho(t)$ of the quantum computer is update via some unitary transformation
\[
\rho(t+1)=U\rho(t) U^\dag.
\]
This can be represented in the discrete phase space as
\[
\mu_\rho(q,p;t)=\sum_{q'p'} Z(p,q|p',q') \mu_\rho(q',p';t),
\]
where $Z$ can be easily obtained from $U$.  This resembles the update map for the probabilities of classical stochastic variables.  However, the properties of $Z$ imply that not all admissible maps are classical; they do not connect single points in phase space and hence are ``nonlocal''.  In reference \cite{Miquel2002Quantum} the authors identify a family of classical maps which can be efficiently implemented on a quantum computer.  The authors admit that the ultimate usefulness of this approach is uncertain but speculate that the phase space representation may inspire improvement and innovation in quantum algorithms.  It certainly makes for some inspiring pictures!

The Leonhardt phase space formalism has also been applied to study decoherence in quantum walks \cite{Lopez2003Phasespace}.  For large system, numerics are often employed to study the main features.  The phase space method offers an intuitive and visual alternative.  It allows one to visually see the quantum interference and its disappearance under decoherence.  Related to these is a hybrid approach between the Wootters and Leonhardt discrete phase spaces used to analyze various aspects of quantum teleportation \cite{Paz2002Discrete}.

\subsection{Finite fields discrete phase space representation\label{S(finite):Fields}}
Recall that when $\dim(\mathcal H)=d$ is prime, Wootters defines the discrete phase space as a $d\times d$ lattice indexed by the group $\mathbb Z_d$.  In reference \cite{Wootters2004Picturing}, Wootters generalizes his original construction of a discrete phase space to allow the $d\times d$ lattice to be indexed by a finite field $\mathbb F_d$ which exists if and only if $d=p^n$ is an integer power of a prime number.  This approach is discussed at length in the paper \cite{Gibbons2004Discrete} authored by Gibbons, Hoffman and Wootters (GHW).

Similar to his earlier approach, Wootters defines the \emph{phase space}, $\Phi_d$, as
a $d\times d$ array of points $\alpha=(q,p)\in \mathbb
F_d\times\mathbb F_d$.  A \emph{line}, $\lambda$, is the set of
$d$ points satisfying the linear equation $aq+bp=c$, where all
arithmetic is done in $\mathbb F_d$.  Two lines are \emph{parallel} if their
linear equations differ in the value of $c$.

The mathematical structure of $\mathbb{F}_d$ is appealing because
lines defined as above have the following useful properties: (i)
given any two points, exactly one line contains both points, (ii)
given a point $\alpha$ and a line $\lambda$ not containing $\alpha$,
there is exactly one line parallel to $\lambda$ that contains
$\alpha$, and (iii) two nonparallel lines intersect at exactly one
point. Note that these are usual properties of lines in Euclidean
space.  As before, the $d^2$ points of the phase space $\Phi_d$ can be partitioned into
$d+1$ sets of $d$ parallel lines called \emph{striations}.  The
line containing the point $(q,p)$ and the origin $(0,0)$ is called a
\emph{ray} and consists of the points $(sq,sp)$, where $s$ is a
parameter taking values in $\mathbb{F}_d$.  We choose each ray,
specified by the equation $aq+bp=0$, to be the representative of the
striation it belongs to.

A translation in phase space, $\mathcal T_{\alpha_0}$, adds a constant
vector, $\alpha_0=(q_0,p_0)$, to every phase space point:
$\mathcal T_{\alpha_0}\alpha=\alpha+\alpha_0$.  Each line, $\lambda$, in a
striation is invariant under a translation by any point contained in
its ray, parameterized by the points $(sq,sp)$. That is,
\begin{equation}
\tau_{(sq,sp)}\lambda=\lambda.\label{striaeinv}
\end{equation}

The discrete Wigner function is
\begin{equation*}
\mu_\rho^{\textrm{field}}(q,p)=\frac{1}{d}\Tr( \rho A_{(q,p)}),
\end{equation*}
where now the Hermitian \emph{phase point operators} satisfy the following properties for a projector valued function $  Q$, called a \emph{quantum net}, to be defined later.
\begin{enumerate}[(a)]
\item[GHW(4)] For each point $\alpha$, $  A$ is Hermitian.
\item[GHW(5)] For any two points $\alpha$ and $\beta$, $\Tr(  A_\alpha  A_\beta)=d\delta_{\alpha\beta}$.
\item[GHW(6)] For any line $\lambda$, $\displaystyle\sum_{\alpha\in\lambda}A_\alpha=d   Q(\lambda)$.
\end{enumerate}
The projector valued function $  Q$ assigns quantum states to lines in phase space.  This mapping is required to satisfy the special property of \emph{translational covariance}, which is defined after a short, but necessary, mathematical digression.  Notice first that properties GHW(4) and GHW(5) are identical to Woo(4) and Woo(5).  Also note that if GHW(6) is to be analogous to Woo(6), the property of translation covariance must be such that the set $\{  Q(\lambda)\}$ when $\lambda$ ranges over a striation forms a PVM.

The set of elements $E=\{e_0,...,e_{n-1}\}\subset \mathbb F_d$ is called a
\emph{field basis} for $\mathbb{F}_d$ if any element, $x$, in
$\mathbb{F}_d$ can be written
\begin{equation}
x=\sum_{i=0}^{n-1}x_ie_i,\label{fieldbasis}
\end{equation}
where each $x_i$ is an element of the prime field $\mathbb{Z}_p$.
The \emph{field trace}\footnote{Note that we will distinguish the
field trace, $\textrm{tr}(\cdot)$, from the usual trace of a Hilbert
space operator, $\textrm{Tr}(\cdot)$, by the case of the first
letter.} of any field element is given by
\begin{equation}
\textrm{tr}(x)=\sum_{i=0}^{n-1}x^{p^i}.\label{fieldtrace}
\end{equation}
There exists a unique field basis, $\tilde E =\{\tilde e_0,...\tilde
e_{n-1}\}$, such that $\textrm{tr}(\tilde e_i e_j)=\delta_{ij}$.  We
call $\tilde E$ the \emph{dual} of $E$.

The construction presented in reference \cite{Gibbons2004Discrete} is physically significant for a
system of $n$ objects (called \emph{particles}) having a $p$
dimensional Hilbert space. A translation operator, $  T_\alpha$
associated with a point in phase space $\alpha=(q,p)$ must act
independently on each particle in order to preserve the tensor
product structure of the composite system's Hilbert space.  We
expand each component of the point $\alpha$ into its field basis
decomposition as in Equation \eqref{fieldbasis}
\begin{equation}
q=\sum_{i=0}^{n-1}q_ie_i\label{q}
\end{equation}
and
\begin{equation}
p=\sum_{i=0}^{n-1}p_if\tilde e_i,\label{p}
\end{equation}
with $f$ any element of $\mathbb{F}_d$.  Note that the basis we
choose for $p$ is a multiple of the dual of that chosen for $q$.
Now, the translation operator associated with the point $(q,p)$ is
\begin{equation}
  T_{(q,p)}=\bigotimes_{i=0}^{n-1}  X^{q_i}  Z^{p_i},\label{transopt}
\end{equation}
Since $X$ and $Z$ are unitary, $  T_\alpha$ is unitary.

We assign with each line in phase space a pure quantum state.  The
quantum net $  Q$ is defined such that for
each line, $\lambda$, $  Q(\lambda)$ is the operator which projects
onto the pure state associated with $\lambda$.  As a consequence of the
choice of basis for $p$ in Equation \eqref{p}, the state assigned to the line
$\tau_\alpha\lambda$ is obtained through
\begin{equation}
  Q(\tau_\alpha\lambda)=  T_\alpha
  Q(\lambda)  T_\alpha^\dag.\label{transcov}
\end{equation}
This is the condition of translational covariance and it implies that each striation is associated with an orthonormal
basis of the Hilbert space.  To see this, recall the property in Equation
\eqref{striaeinv}. From Equation \eqref{transcov}, this implies that, for
each $s\in\mathbb{F}_d$, $  T_{(sq,sp)}$ must commute with
$  Q(\lambda)$, where the line $\lambda$ is any line in the
striation defined by the ray consisting of the points $(sq,sp)$.
That is, the states associated to the lines of the striation must be
common eigenstates of the unitary translation operator
$  T_{(sq,sp)}$, for each $s\in \mathbb{F}_d$. Thus, the states are
orthogonal and form a basis for the Hilbert space.  That is, their projectors form a PVM which makes GHW(6) identical to Woo(6) when $d$ is prime.

In reference \cite{Gibbons2004Discrete}, the authors note that, although the association
between states and vertical and horizontal lines is fixed, the
quantum net is not unique.  In fact, there are $d^{d+1}$ quantum
nets which satisfy Equation \eqref{striaeinv}.  When $d$ is prime, one of these quantum nets corresponds exactly to the original discrete Wigner function defined by Wootters in Section \ref{S(finite):Wootters}.

\subsubsection{Application: quantum computation}
As conjectured by Galv\~{a}o \cite{GalvAo2005Discrete}, the authors of reference \cite{Cormick2006Classicality} have shown the only quantum states having a non-negative discrete Wigner function \footnote{Note that it is assumed the discrete Wigner function is non-negative for all definitions - that is, for all quantum nets.} are convex combinations of stabilizer states, which are simultaneous eigenstates of the generalized Pauli operators \cite{Gottesman1997Stabilizer}.  Working only with stabilizer states is ``classical'' in the sense that that one can represent them with only a polynomial number of classical bits whereas an arbitrary quantum state requires a exponential number of bits \cite{Nielsen2000Quantum}.

Strengthening the connection between negativity and non-classicality, it was also shown that the unitary operators preserving the non-negativity of the discrete Wigner function are a subset of the Clifford group, which are those unitaries which preserve Pauli operators under a conjugate mapping.  According to the Gottesman-Knill theorem, a quantum computation using only operators from the Clifford group and stabilizer states can be efficiently simulated on a classical computer \cite{Gottesman1997Stabilizer}.  Thus, as noted in reference \cite{GalvAo2005Discrete} for a particular computational model, negativity of the Wigner function is necessary for quantum computational speedup.   A direct generalization of this result to continuous variables appears in reference \cite{Bartlett2002Efficient} where the authors generalize the stabilizer states to the familiar Gaussian states used in quantum optics settings.

This discrete Wigner function was also used to analyze quantum error correcting codes in reference \cite{Paz2005Qubits}.  The aim was to gain insights and intuition for various quantum maps by studying their pictorial representation in the discrete phase space.

\subsection{Discrete Cahill-Glauber formalism\label{S(finite):Cahill-Glauber}}

In reference \cite{Ruzzi2005Extended}, Ruzzi \emph{et al} have discretized the Cahill-Glauber phase-space formalism.  The set of operators $\{S(\eta,\xi)\}$, where $\eta,\xi\in[-l,l]$ and $l=\frac{d-1}{2}$ ($d$ odd), is called the \emph{Schwinger basis} and explicitly given by
\[
S(\eta,\xi)=\frac{1}{\sqrt{d}}X^\eta Z^\xi \omega^{\frac{\eta\xi}{2}}.
\]
These $d^2$ operators form an orthonormal basis for the space of linear operators.  In analogy with the Cahill-Glauber formalism, the basis is generalized to
\[
S^{(s)}(\eta,\xi)=S(\eta,\xi)\mathcal K(\eta,\xi)^{(-s)},
\]
where $\abs s\leq 1$ is any complex number and $\mathcal K(\eta,\xi)$ is a (relatively) complicated expression of Jacobi $\vartheta$-functions (see the Appendix of reference \cite{Ruzzi2005Extended}).  Next we take the Fourier transform
\[
T^{(s)}(q,p)=\frac{1}{\sqrt{d}}\sum_{\eta,\xi=-l}^l S^{(s)}(\eta,\xi)\omega^{-(\eta q + \xi p)}.
\]
The set operators $\{T^{(s)}(q,p)\}$ is the discrete analog of the $s$-ordered mapping kernel of the Cahill-Glauber formalism.  Moreover, the authors of reference \cite{Ruzzi2005Extended} have shown that the continuous limit of this set is indeed the Cahill-Glauber mapping kernel.

Suppose $s$ is real.  Then , the operators $\{T^{(s)}(q,p)\}$ enjoy the following familiar properties:
\begin{align*}
& T^{(s)}(q,p)^\dag= T^{(s)}(q,p),\\
& \Tr( T^{(s)}(q,p) T^{(-s)}(q',p'))=d\delta_{qq'}\delta_{pp'}.
\end{align*}
Thus, similarly to the discrete kernel of Heiss and Weigert, $\{T^{(s)}(q,p)\}$ and $\{T^{(-s)}(q,p)\}$ are dual bases for the space of Hermitian operators.

In the now familiar way, we can define a quasi-probability function on the $(q,p)$ phase space as
\begin{equation}\label{Ruzzi s-ordered function}
\mu^{(s)}_\rho(q,p)=\Tr(T^{(s)}(q,p) \rho).
\end{equation}
Cahill and Glauber showed, for their $s$-ordered formalism, that $s=0$ corresponds to the Wigner function; $s=1$ corresponds to the Husimi function; and $s=-1$ corresponds to the Glauber-Sundarshan function.  Using equation \eqref{Ruzzi s-ordered function}, we call, for example, the function obtained when $s=0$ the discrete Wigner function.

\subsubsection{Application: quantum teleportation}

Marchiolli \emph{et al} have applied this formalism to quantum tomography and teleportation \cite{Marchiolli2005Extended}.  The teleportation protocol was analyzed for arbitrary $s$ but, for brevity, we will consider the $s=0$ case (which is now assumed so the superscript can be ignored).  The teleportation protocol utilizes entanglement to transfer a quantum state between two parties through the exchange of only a small amount of classical information \cite{Nielsen2000Quantum}.  Consider the tripartite system
\[
\rho=\rho^{(1)}\otimes\rho^{(2,3)},
\]
where one party possess subsystem 1 and 2 and the other possess subsystem 3.  The goal is for $\rho^{(1)}$ to be transferred from subsystem 1 to subsystem 3 without simply swapping them.  It is essential that the shared state $\rho^{(2,3)}$ be entangled.  In particular, assume it is a maximally entangled pure Bell-state \cite{Nielsen2000Quantum}.  We choose, following Wootters \cite{Wootters1987A}, to construct the global phase space to be a Cartesian product of the phase spaces of the individual subsystems.  The discrete Wigner function of the whole system is then
\[
\mu_\rho(q_1,q_2,q_3;p_1,p_2,p_3)=\mu_{\rho^{(1)}}(q_1,p_1)\mu_{\rho^{(2,3)}}(q_2,q_3;p_2,p_3).
\]
A Bell-measurement is performed on the first two subsystems, which in phase space is interpreted as a measurement of the total momentum  and relative coordinate of the subsystem composed of subsystems 1 and 2.  Marginalizing over subsystems 1 and 2 gives
\[
\mu_{\rho^{(3)}}(q_3,p_3)=\mu_{\rho^{(1)}}(q_3-\alpha,p_3+\beta),
\]
where $\alpha$ and $\beta$ parameterize the result of the Bell-measurement (note $\rho^{(3)}$ can be identified as the reduced state of subsystem 3).  Thus, the final state of the subsystem 3 is simply a displacement in the phase space and communicating only the measurement result $(\alpha,\beta)$ leads to recovery of the initial state.

The discrete Husimi function ($s=1$) was used to define a discrete analog of squeezed states \cite{Marchiolli2007Discrete} and to analyze spin tunneling effects in a particular toy model of interacting fermions \cite{Marchiolli2009Quasiprobability}.

\subsection{Probability tables\label{S(finite):tables}}

In 1986, before introducing the discrete Wigner function, Wootters represented the quantum state as a ``probability table'' which was simply a list of outcome probabilities for a complete set of measurements  \cite{Wootters1986Quantum}.  The complete set of interest was that of \emph{mutually unbiased bases} (MUBs).  We call $n$ bases $\{\psi_k^n\}$ mutually unbiased if they satisfy
\begin{equation}\label{MUB definition}
\abs{\ip{\psi_{k'}^{n'}}{\psi_k^n}}^2=\delta_{kk'}\delta_{nn'}+\frac1d(1-\delta_{nn'}).
\end{equation}
Wootters noted for $d$ prime, a set of $n=d+1$ MUBs could be explicitly constructed via a prescription in reference \cite{Ivonovic1981Geometrical}.  Wootters also posed many questions of MUBs, some of which have now been answered.  It is now known that for any dimension $3\leq n\leq d+1$, where the upper bound can be achieved, by construction, for any dimension which is a power of a prime \footnote{For a recent review of the MUB problem see \cite{Combescure2006The}.}.

Here we will consider the case when $d$ is prime and all probability tables for non prime dimensions can be built up from those for their prime factors, in much the same was as was done in section \ref{S(finite):Wootters} for the discrete phase spaces.

Consider the generalized Pauli operator $Z$ and its eigenbasis $\{\phi_k\}$ and the projectors onto these vectors $P_k:=\phi_k\phi_k^*$.  Define the finite Fourier transform
\begin{equation}\label{finite fourier transform}
F=\frac{1}{\sqrt d}\sum_{k,k'=0}^{d-1} \omega^{kk'}\phi_k\phi_{k'}^*
\end{equation}
and the operator
\begin{equation}\label{MUB V operator}
V=\sum_{k=0}^{d-1}\omega^{\frac{k^2}{2}}FP_kF^\dag.
\end{equation}
Here, as before, division by two represents the multiplicative inverse of the element 2.  For Hilbert space dimension $d=2$, this operator requires the special definition
\begin{equation}\label{MUB V operator d=2}
V=\frac12\left(
    \begin{array}{cc}
      1+i & 1-i \\
      1+i & 1-i \\
    \end{array}
  \right).
\end{equation}
Now we can construct $d+1$ MUBs via
\begin{align*}
\psi^0_k&=\phi_k,\\
\psi_k^n&=V^n\phi_k,~n=1,\ldots,d.
\end{align*}
We will denote the projectors onto these basis vectors $P(n,k):=\psi_k^n\psi_k^{n*}$.  Then the probability of obtaining the $k$th outcome when measuring in the $n$th basis is
\begin{equation}\label{MUB probabilities}
\mu_\rho(n,k)=\Tr(\rho P(n,k)).
\end{equation}
This can be view as a matrix in which the columns index the measurements while the rows index the outcomes.  This can also be viewed as a mapping whose inverse is given by
\begin{equation}\label{MUB probability inversion}
\rho=\sum_{n=0}^d\sum_{k=0}^{d-1}\mu_\rho(n,k)P(n,k)-\id.
\end{equation}

\subsubsection{Application: quantum mechanics without amplitudes}

The purpose of reference \cite{Wootters1986Quantum} was not to introduce a new representation of the quantum state \emph{per se}, but to show that the whole of operational formalism of quantum mechanics can be done rather simply without complex numbers.

Wootters notes first:
\begin{quote}
It is obviously possible to devise a formulation of quantum mechanics
without probability amplitudes. One is never forced to use any quantities in
one's theory other than the raw results of measurements. However, there is
no reason to expect such a formulation to be anything other than
extremely ugly.
\end{quote}

To our surprise, the rule for transitioning between the probability tables turn out to be remarkably simple.  In quantum theory the transition probability from state $\rho$ to $\rho'$ is the probability of preparing the state $\rho$, performing the measurement $\{\rho',\id-\rho'\}$ and obtaining $\rho'$.  This transition probability is $\Pr(\rho\to\rho')=\Tr(\rho\rho')$.  If we work with the probability tables and call the tables $\mu$ and $\mu'$, Wootters obtains
\begin{equation}\label{prob tables update rule}
\Pr(\mu\to \mu')=\sum_{n=0}^d\sum_{k=0}^{d-1}\mu(n,k)\mu'(n,k)-1.
\end{equation}

Unfortunately, as Wootters notes, it is not easy to ignore the density matrix altogether.  We have yet to specify which probability tables are valid and which do not correspond to quantum states.  The simplest characterization of valid probability tables is to say those for which equation \eqref{MUB probability inversion} is a unit trace positive semi-definite matrix.  This is unsatisfying as we would like a characterization independent of the density matrix.

\subsection{Hardy's vector representation \label{S(finite):Hardy}}

In reference \cite{Hardy2001Quantum} Hardy showed that five axioms are sufficient to imply a
special vector representation which is equivalent to an
operational form of quantum theory.  We first describe the vector representation.

Consider a basis for a $d$ dimensional Hilbert space $\{\phi_k\}$ (the eigenbasis of $Z$, say) and the following set of $d^2$ projectors:
\begin{equation}\label{Hardy's basis}
P_{kj}:=\begin{cases}
\phi_k\phi_k^*& \textrm{if}~k=j\\
(\phi_k+\phi_j)(\phi_k+\phi_j)^* & \textrm{if}~k<j\\
(\phi_k+i\phi_j)(\phi_k+i\phi_j)^* & \textrm{if}~k>j
\end{cases}.
\end{equation}
These projectors span the space of linear operators on the Hilbert space spanned by $\{\phi_k\}$.  Now we vectorize by choosing an arbitrary but fixed ordering convention.  For definiteness, we choose to stack the rows on top of one another.  To this end, define $\alpha:=dk+j$ and $P(\alpha):=P_{kj}$.  Then, the vector representation of the state $\rho$ is given by
\begin{equation}\label{Hardy vector rep}
\mu_\rho(\alpha)=\Tr(\rho P(\alpha)).
\end{equation}
Now the outcome of any quantum measurement can be assigned a positive operator $E$.  Call this ``outcome $E$''.  Define the vector $\xi_E(\alpha)$ implicitly through
\[
E=\sum_\alpha \xi_E(\alpha) P(\alpha).
\]
Then, the probability of ``outcome $E$'' is given by
\[
\Pr(\textrm{``outcome $E$''})=\sum_\alpha \xi_E(\alpha) \mu_\rho(\alpha),
\]
which, in vector notation, we can write as the dot product $\overrightarrow{\xi}\cdot\overrightarrow{\mu}$.

We define the sets $M$ and $\Xi$ as the set of vectors obtainable through the mappings $\rho\mapsto\mu$ and $E\mapsto \xi$ defined above.  More precise statements, in the form of inequalities, which make no recourse to the usual quantum mechanical objects, can be made to define these sets.  Assuming this has been done, we can rephrase the axioms of quantum mechanics, without mention of Hermitian operators and the like, in this vector representation succinctly as follows: states are represent by vectors $\overrightarrow{\mu}\in M$; measurement outcomes are represented by vectors $\overrightarrow{\xi}\in\Xi$; the probability of ``outcome $\overrightarrow{\xi}$'' in state $\overrightarrow{\mu}$ is given by $\Pr(\textrm{``outcome $\overrightarrow{\xi}$''})=\overrightarrow{\xi}\cdot\overrightarrow{\mu}$.

\subsubsection{Application: quantum axiomatics}
As was the case in the previous section, this vector representation was not introduced as such.  In references \cite{Hardy2001Quantum, Hardy2001Why}, Hardy has shown that five axioms are sufficient to imply the real vector formalism of quantum mechanics.  The frequency interpretation of probability was given its own axiom.  However, if we take our everyday intuitive notion of probability \cite{PortaMana2007Studies}, we no longer require this first axiom, which is independent of the rest \cite{Schack2003Quantum}.

We will make use of the following definitions:
\begin{itemize}
\item The number of \emph{degrees of freedom}, $K$, is defined
as the minimum number of yes-no measurements whose outcome probabilities are
needed to determine the state (of belief in the mind of a reasonable agent), or, more
roughly, as the number of real parameters required
to specify the state.
\item The \emph{dimension}, $d$, is defined as the maximum
number of states that can be reliably distinguished
from one another in a single shot measurement.
\end{itemize}

Axiom 1 defines probability as limiting frequencies and is not required \cite{Schack2003Quantum}.  The remainder of the axioms are as follows:
\begin{itemize}
\item[2] \emph{Simplicity.} $K$ is determined by a function
of $N$ (i.e. $K = K(d)$) where $d = 1, 2, \ldots$ and
where, for each given $d$, $K$ takes the minimum
value consistent with the axioms.
\item[3] \emph{Subspaces.} A system whose state is constrained
to belong to an $n$ dimensional subspace
(i.e. have support on only $n$ of a set of $d$ possible
distinguishable states) behaves like a system
of dimension $n$.
\item[4] \emph{Composite systems.} A composite system
consisting of subsystems $A$ and $B$ satisfies $d =
d_Ad_B$ and $K = K_AK_B$.
\item[5] \emph{Continuity.} There exists a continuous reversible
transformation on a system between any
two pure states of that system.
\end{itemize}

These four axioms are sufficient for a derivation of the vector representation of quantum theory defined above.  This axiomatization is also important for contrasting quantum theory with classical probability theory.  As Hardy has shown, discrete classical probability theory (of dice, coins and so on) can be derived from only axioms 2, 3 and 4.  That is, the only difference between quantum and classical theory is the existence of a \emph{continuous} reversible transformation between pure states.

\subsection{The real density matrix\label{S(finite):Havel}}

In reference \cite{Havel2002The} Havel defined the ``real
density matrix'' which, not surprisingly, is a particular real-valued matrix representation of the quantum state.

For $d=2$, define the $2\times2$ matrix of Pauli operators as
\begin{equation}\label{pauli operator matrix 2}
P=\left(
    \begin{array}{cc}
      \id & X \\
      Y & Z \\
    \end{array}
  \right).
\end{equation}

For $d=2^n$, denote the bits in the binary expansion of $k$ as
\[
k=\sum_{a=1}^n k_a\times 2^{n-a},
\]
and similarly for $j$.  Then, the $d\times d$ matrix of Pauli operators is given by
\begin{equation}\label{pauli operator matrix d}
P_{kj}=P_{k_1j_1}\otimes\cdots\otimes P_{k_nj_n}.
\end{equation}
These $d^2$ operators are orthogonal, and hence form a basis for the space of linear operators on the $d$ dimensional Hilbert space.  Therefore, each density matrix can be expressed as
\[
\rho=\frac{1}{d}\sum_{k,j=0}^{d-1}\sigma_{kj}P_{kj},
\]
where the coefficients $\sigma_{kj}$, explicitly given by
\begin{equation}\label{real density matrix}
\sigma_{kj}=\Tr(\rho P_{kj}),
\end{equation}
form the \emph{real density matrix}.

\subsubsection{Application: NMR pedagogy}

Since the observables measured in NMR experiments are elements of the matrix of Pauli operators \eqref{pauli operator matrix d}, the elements of the real density matrix are the experimentally measurable values.  There is no need to reconstruct the density matrix.  This is also a convenient fix to the problem of reporting or visualizing a quantum state.  Since the density matrix contains $d^2$ complex values, it is often graphically displayed as two $d\times d$ matrices of the real and imaginary parts.  Not only is this redundant, it is conceptually awkward.  On the other hand, the real density matrix can be displayed as a single $d\times d$ matrix of real values.  Havel offers the real density matrix as useful teaching device in such situations.

\subsection{Symmetric representations\label{S(finite):SIC}}
Consider the unitary group
\begin{equation}
U_{(p,q)}=\omega^{\frac{pq}{2}}X^pZ^q,\label{Weylopts}
\end{equation}
where $(p,q)\in\mathbb Z_d\times\mathbb Z_d$. In reference \cite{Renes2004Symmetric}, the authors conjecture\footnote{Apparently this
was conjectured earlier by Zauner in a Ph.D. thesis not available in
english.  See reference \cite{Zauner1999Quantum} and \url{http://www.imaph.tu-bs.de/qi/problems/23.html}.} that
the set
$\{U_{(p,q)}\phi\}$ for some fiducial $\phi\in\mathcal H$ forms a \emph{symmetric informationally complete positive operator valued measure} (SIC-POVM).  The defining condition of a SIC-POVM is a set of $d^2$ vectors $\{\phi_k\}$ such that
\begin{equation}
|\ip{\phi_k}{\phi_j}|^2=\frac{\delta_{kj}d+1}{d+1}.\label{SICPOVMdef}
\end{equation}
The set is called symmetric since the vectors have equal overlap.  The POVM is formed by taking the projectors onto the one-dimensional subspaces spanned by the vectors.  It is informationally complete since these $d^2$ projectors span the space of linear operators acting on $\mathcal H$.

As of writing, it is an open question whether SIC-POVMs exist in
every dimension. Although numerical evidence suggests this to be the case \cite{Scott2010Symmetric}.

For the remainder of this section we assume, for any dimension $d$, a SIC-POVM exists.  Define the operators $P_k:=\frac1d\phi_k\phi_k^*$.  Then, define the \emph{symmetric-representation} of a quantum state $\rho$ as
\begin{equation}\label{SIC-representation}
\mu_\rho(k)=\Tr(\rho P_k).
\end{equation}
This is a probability distribution and in particular it is \emph{the} probability distribution for the POVM measurement formed by the effects $\{P_k\}$.  As we have noted, this is an informationally complete measurement. Therefore, the density matrix can be reconstructed from the probabilities via
\begin{equation}\label{SIC reconstruction}
\rho=d(d+1)\sum_{k=0}^{d^2-1}\mu_\rho(k)P_k-\id.
\end{equation}

When viewed as a mapping, this representation is a bijection from the convex set of density matrices to a convex subset of the $d^2$-dimensional probability simplex.

\subsubsection{Application: Quantum Bayesianism}

Quantum Bayesianism \cite{Schack2001Quantum,Caves2002Quantum,Fuchs2002Quantum,Fuchs2009QuantumBayesian,Fuchs2010QBism} is an interpretation of quantum theory which sheds new light on not only the tradition ``foundational'' problems (the ``measurement problem'', for example) but also many concepts in quantum theory, such as the ``unknown quantum state''\cite{Caves2002Unknown}.  A key realization is the mathematical and conceptual sufficiency of viewing quantum states as the probability distribution via the Born rule for a fixed POVM $\{E_k\}$.  The only remaining freedom is which one.

One ideal is to have the POVM elements orthogonal: $\Tr(P_k P_j)=\delta_{kj}$.  The statement that is not possible is equivalent to theorem \ref{theorem:noclassical}.  Next, then, we desire them to be as close to orthogonal as possible.  Formally, we want to minimize the quantity
\[
F = \sum_{kj} (\Tr(P_kP_j)-\delta_{kj})^2.
\]
This expression is minimized if and only if the $\{P_k\}$ form a SIC-POVM.  Using the reconstruction formula in equation \eqref{SIC reconstruction} it can be shown that, in terms of this SIC-representation, the Born rule for a measurement $\{E_j\}$ given state $\mu(k)$ is
\begin{equation}\label{SIC born}
\Pr(\textrm{outcome }j)=\sum_k \left(\mu(k)-\frac1d\right)\xi(j|k),
\end{equation}
where $\xi(j|k)=\Tr(E_jP_k)$.  In the same sense as the SIC-POVM being as close to orthogonal as possible, equation \eqref{SIC born} is as close as possible to the classical Law of Total Probability.  Effort is being made to use equation \eqref{SIC born} as a starting point for a natural set of axioms which would single out quantum theory.

\section{Unification of the quasi-probability representations via frames\label{S:Frame}}

\subsection{Introduction to frames\label{S:frames}}

A \emph{frame} can be thought of as a generalization of an orthonormal basis \cite{Christensen2003An}.  However, the particular Hilbert space under consideration here is not $\mathcal H$.  Considered here is a generalization of a basis for $\herm$, which is the set of Hermitian operators on an complex Hilbert space of dimension $d$.  With the trace inner product $\ip{  A}{  B}:=\Tr(  A  B)$, $\herm$ forms a \emph{real} Hilbert space itself of dimension $d^2$.  Let $\Lambda$ be some set of cardinality $d^2 \leq \abs{\Lambda} <\infty$.

A frame\footnote{Frames have been considered in the context of quantum theory for other purposes in \cite{Scott2006Tight,Bisio2009Optimal}.} for $\herm$ is a set of operators $\mathcal F:=\{ F(\lambda)\}\subset\herm$ which satisfies
\begin{equation}\label{def_discrete_frame}
a\norm{A}^2\leq\sum_{\lambda\in\Lambda} \Tr[ F(\lambda){A}]^2\leq b\norm{A}^2,
\end{equation}
for all $A\in\herm$ and some constants $a,b>0$.  This definition generalizes a defining condition for an orthogonal basis $\{ B_k\}_{k=1}^{d^2}$
\begin{equation}\label{def_basis}
\sum_{k=1}^{d^2}\Tr[{B_k}{A}]^2 = \norm{A}^2,
\end{equation}
for all $A\in\herm$.  The mapping $A\mapsto\Tr[ {F(\lambda)}{A}]$ is called a \emph{frame representation} of $\herm$.

A frame $\mathcal D:=\{ D(\lambda)\}$ which satisfies
\begin{equation}\label{def_dual}
A=\sum_{\lambda\in\Lambda} \Tr[{ F(\lambda)}{A}] D(\lambda),
\end{equation}
for all $A\in\herm$, is a \emph{dual frame} (to $\mathcal F$).  The \emph{frame operator} associated with the frame $\mathcal F$ is defined as
\begin{equation*}\label{def_frameop}
 S(A):=\sum_{\lambda\in\Lambda}  \Tr[{ F(\lambda)}{A}]F(\lambda).
\end{equation*}
If the frame operator is proportional to the identity \emph{superoperator}, $ S=a\tilde \id$, the frame is
called \emph{tight}. The frame operator is invertible and thus every
operator has a representation
\begin{align}\label{frame_decomposition}
A= S^{-1} SA=\sum_{\lambda\in\Lambda} \Tr[{ F(\lambda)}{A}] S^{-1} F(\lambda).
\end{align}
The frame $ S^{-1}\mathcal F$ is called
the \emph{canonical dual frame}.  When $\abs {\Lambda}=d^2$, the canonical dual frame is the unique
dual, otherwise there are infinitely many choices for a dual frame.  A tight frame is ideal from the perspective that its canonical dual is proportional to the frame itself.  Hence, the reconstruction is given by the convenient formula
\begin{align*}
A= S^{-1} SA=\frac{1}{a}\sum_{\lambda\in\Lambda} \Tr[ {F(\lambda)}{A}] F(\lambda)
\end{align*}
which is to be compared with
\begin{align*}
A=\sum_{k=1}^{d^2}\Tr[{ B_k}{A}] B_k
\end{align*}
which defines $\{ B_k\}_{k=1}^{d^2}$ as an orthonormal basis.

\subsection{Unification via the necessity and sufficiency of frames\label{S:framenecessary}}

Recalling the formal definition (\ref{def:qp rep quant theory}) of a quasi-probability representation, we have the following theorem
\begin{theorem}\label{theorem:frame}
Two functions $\mu$ and $\xi$ constitute a quasi-probability representation if and only if
\begin{align*}
\mu_\rho(\lambda)&=\Tr[\rho F(\lambda)]\\
\xi_E(\lambda)&=\Tr[E D(\lambda)],
\end{align*}
where $\{F(\lambda)\}$ is a frame and $\{D(\lambda)\}$ is one if its duals.
\end{theorem}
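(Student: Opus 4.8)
The plan is to prove the two implications separately, exploiting the structure of $\herm$ as a real Hilbert space of dimension $d^2$ under the inner product $\langle A,B\rangle=\Tr[AB]$. The single fact I would lean on throughout is that, in finite dimension, a set $\{F(\lambda)\}\subset\herm$ satisfies the frame inequality \eqref{def_discrete_frame} if and only if it spans $\herm$; this replaces all analytic frame-bound estimates by a purely linear-algebraic spanning condition.

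For the easy direction, suppose $\{F(\lambda)\}$ is a frame with dual $\{D(\lambda)\}$ and set $\mu_\rho(\lambda)=\Tr[\rho F(\lambda)]$ and $\xi_E(\lambda)=\Tr[E D(\lambda)]$. Both are manifestly linear in $\rho$ and $E$ and take real values, since the trace of a product of Hermitian operators is real. Property (c) is then immediate: applying the dual-frame reconstruction \eqref{def_dual} to $\rho$ gives $\rho=\sum_\lambda\Tr[F(\lambda)\rho]D(\lambda)$, whence $\Tr[\rho E]=\sum_\lambda\Tr[\rho F(\lambda)]\Tr[D(\lambda)E]=\sum_\lambda\mu_\rho(\lambda)\xi_E(\lambda)$. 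The normalizations in (a) and (b) are the conditions $\sum_\lambda F(\lambda)=\id$ and $\Tr[D(\lambda)]=1$; I would record that the latter is exactly $\xi_\id(\lambda)=\Tr[D(\lambda)]=1$, and that taking the trace of the reconstruction of $\rho$ gives $\sum_\lambda\mu_\rho(\lambda)=\sum_\lambda\mu_\rho(\lambda)\Tr[D(\lambda)]=\Tr[\rho]=1$, so the state normalization follows from the effect normalization.

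For the converse, assume $(\mu,\xi)$ is a quasi-probability representation. For each fixed $\lambda$, $\rho\mapsto\mu_\rho(\lambda)$ is a real affine functional on $\den$; since $\den$ affinely spans the hyperplane $\{\Tr A=1\}$, it extends to an affine functional on $\herm$, and the additive constant can be absorbed using $\Tr[\rho]=1$ to write $\mu_\rho(\lambda)=\Tr[\rho F(\lambda)]$ for a unique Hermitian $F(\lambda)$. Likewise $\xi_E(\lambda)=\Tr[E D(\lambda)]+c_\lambda$ for Hermitian $D(\lambda)$ and a constant $c_\lambda=\xi_0(\lambda)$. The heart of the argument is to promote property (c), which holds only on $\den\times\eff$, to an identity on all of $\herm\times\herm$: both sides are bilinear, and since $\den$ and $\eff$ each span $\herm$ linearly, fixing one argument and varying the other shows $\Tr[AB]=\sum_\lambda\Tr[AF(\lambda)]\Tr[B D(\lambda)]$ for all $A,B\in\herm$. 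The constants $c_\lambda$ drop out here: setting $E=0$ in (c) forces $\sum_\lambda c_\lambda F(\lambda)=0$, so they are annihilated by the frame and may be discarded. Holding $A$ fixed and invoking nondegeneracy of the trace inner product then yields $A=\sum_\lambda\Tr[A F(\lambda)]D(\lambda)$ for every $A$, which is precisely \eqref{def_dual}, so $\{D(\lambda)\}$ is a dual of $\{F(\lambda)\}$. Finally, this reconstruction is impossible unless $\{F(\lambda)\}$ spans $\herm$ (any $A$ orthogonal to every $F(\lambda)$ would reconstruct to $0$), and in finite dimension spanning is equivalent to the frame property, so $\{F(\lambda)\}$ is a frame.

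The main obstacle I anticipate is bookkeeping in the converse rather than any deep idea: cleanly justifying the passage from an affine relation on the convex bodies $\den$ and $\eff$ to a bilinear identity on the full space $\herm\times\herm$, and in particular tracking the additive constant $c_\lambda=\xi_0(\lambda)$ to confirm that it does not obstruct the dual-frame identity (it vanishes precisely when one insists that the zero effect be assigned probability zero). Everything else reduces to the finite-dimensional equivalence of \emph{frame} and \emph{spanning set}, which sidesteps the frame bounds entirely.
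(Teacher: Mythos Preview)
The paper does not actually prove this theorem; it simply states ``This was proven in reference \cite{Ferrie2009Framed}'' and moves on. So there is no in-paper argument to compare your proposal against.

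On its own merits, your argument is sound and follows the natural line: affine extension from $\den$ and $\eff$ to $\herm$ via Riesz representation, then bilinearity of~(c) together with the linear span of $\den$ and $\eff$ to obtain the reconstruction identity $A=\sum_\lambda \Tr[A F(\lambda)]\,D(\lambda)$, and finally the observation that in finite dimension ``frame'' is equivalent to ``spanning set''. This is essentially the approach of the cited paper.

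Two remarks on the bookkeeping you flagged. First, your treatment of the constants $c_\lambda=\xi_0(\lambda)$ is correct as far as the dual-frame identity goes: setting $E=0$ gives $\sum_\lambda c_\lambda F(\lambda)=0$, so they do not obstruct the reconstruction of $A$. However, note that if some $c_\lambda\neq 0$ you genuinely cannot write $\xi_E(\lambda)=\Tr[E D(\lambda)]$ exactly, so the theorem as literally stated requires $\xi_0=0$; this is imposed explicitly in the infinite-dimensional version later in the paper (condition~3 in Section~\ref{S:infinite generalization}) and is implicit here. Second, in the ``if'' direction, an arbitrary frame--dual pair does not automatically satisfy the normalizations $\sum_\lambda F(\lambda)=\id$ and $\Tr[D(\lambda)]=1$; the paper records these as \emph{consequences} of being a quasi-probability representation in Section~\ref{S:infinite generalization}, so the equivalence is really with frame--dual pairs obeying those side conditions. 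You correctly identified both points; they are imprecisions in the theorem's statement rather than gaps in your proof.
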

This was proven in reference \cite{Ferrie2009Framed}\footnote{Compare this to a similar result in a more operational setting in references  \cite{PortaMana2004Probability,PortaMana2004Why}.}.  This theorem allows us to make the following statement which is equivalent to the no-classical-representation theorem (\ref{theorem:noclassical}) and negativity theorem (\ref{theorem:neg}): there does not exists two frames of positive operators which are dual to each other.

With this results we can create quasi-probability representations of the whole operational formalism of quantum theory, not just states.  First, we chose one of the discrete quasi-probability functions described in section \ref{S:Finite}. Second, we identify the frame which gives rise to it.  Lastly, we compute its dual frame to obtain the part of the quasi-probability representation mapping, $\xi$, which takes measurements to functions on the space $\Lambda$.

Suppose instead the functions $\mu$ and $\xi$ are defined via
\begin{align*}
\mu_\rho(\lambda)&=\Tr[\rho F(\lambda)]\\
\xi_E(\lambda)&=\Tr[E F(\lambda)],
\end{align*}
where $\{F(\lambda)\}$ is a frame.  Then,
\begin{enumerate}[(a)]
\item $\mu_\rho(\lambda)\in[0,1]$ and $\sum_\lambda \mu_\rho(\lambda)=1$,
\item $\xi_E(\lambda)\in[0,1]$ and $\xi_\id(\lambda)=1$,
\item $\Tr(\rho E)=\sum_{\lambda,\lambda'}  \mu_\rho(\lambda) \xi_E(\lambda') \Tr[D(\lambda)D(\lambda')]$.
\end{enumerate}
In reference \cite{Ferrie2008Frame} this representation was called a \emph{deformed} probability representation since states and measurements are represented as true probabilities but the law of total probability is deformed.

The frame formalism also provides a convenient transformation matrix to map between representations.  We have
\begin{align*}
\mu_\rho(\lambda)&=\Tr[\rho F(\lambda)]\\
&=\sum_{\lambda'} \Tr[\rho F'(\lambda')]\Tr[D'(\lambda')F(\lambda)]\\
&=\sum_{\lambda'} T_{\lambda'\lambda} \mu'(\lambda'),
\end{align*}
where the matrix $T_{\lambda'\lambda}$ is the symmetric matrix which takes the $\mu$ representation to $\mu'$ representation.

\subsection{To infinity and beyond\label{S:infinite generalization}}

Given a quasi-probability representation note that the frame satisfies
\[
\sum_{\lambda\in\Lambda} F(\lambda) =\id.
\]
Thus, if the quasi-probability representation satisfies $0 \leq \mu(\lambda)\leq 1$, the frame is an informationally complete positive operator valued measure (IC-POVM)\footnote{Frames have also been used in definition of informationally completeness in the context of tomography in reference \cite{Bisio2009Optimal}.}.  Similarly, the dual frame satisfies
\[
\Tr[D(\lambda)]=1,
\]
for all $\lambda\in\Lambda$.  Thus, if the the quasi-probability representation satisfies $0\leq\xi(\lambda)\leq1$, the dual frame is a set of density operators.  The definitions and results we have so far considered are tailored to the case $d<\infty$ -- that is, finite dimensional quantum theory.  Now we will extend them to infinite dimensions as done in reference \cite{Ferrie2010Necessity}.

We allow for now the dimension of the Hilbert space $\mathcal H$ to be arbitrary and let $\meas$ be a measurable space, where $\Sigma$ is a $\sigma$-algebra. Over this space, $\smeas$ denotes the bounded signed measures while $\measf$ denotes the bounded measurable functions.  A signed measure generalizes the usual notion of measure to allow for negative values.  The classical states are the probability measures $\cstate\subset\smeas$.  We define the generalization of frame to the space of trace-class operators $\tsa$.

We generalize the definition of informationally complete observable for infinite dimensions and define an \emph{operator valued measure} as a map $F:\Sigma\to\bsa$ satisfying $F(\emptyset)=0$, $F(\Omega)=1$ and
\[
F\left(\bigcup_{i=1}^\infty B_i\right)=\sum_{i=1}^\infty B_i,
\]
where the sets $B_i\in\Sigma$ are mutually disjoint and the sum converges in the weak sense.  A \emph{frame} for $\tsa$ is an operator valued measure $F$ for which the map $T:\tsa\to\smeas$,
\[
T(W)(B):=\Tr(WF(B)),
\]
is injective.  The map $T$ is called a \emph{frame representation} of $\tsa$.

Similarly, generalizing the reconstruction formula defining the dual frame for finite dimensions yields the generalized notion of a dual.  That is, given a frame $F$, a \emph{dual frame} to $F$ is a map $D:\Omega\to\bsadual$, the dual space, for which the function
\[
(SA)(\omega):=(D(\omega))(A)
\]
is measurable and satisfies
\begin{equation}\label{reconstruction formula}
A=\int_\Omega SA dF,
\end{equation}
for all $A\in\bsa$.

On the other hand, we define a quasi-probability representation of quantum mechanics is a pair of mappings $T:\qstate\to\smeas$ and $S:\qeff\to\measf$ such that
\begin{enumerate}
\item $T$ and $S$ are affine, $T$ is bounded.
\item $T\rho(\Omega)=1$.
\item $S(0)=0$.
\item For all $\rho\in\qstate$ and $E\in\qeff$,
\begin{equation}\label{LTP in classical rep2}
\Tr(\rho E)=\int_\Omega  (SE) d(T\rho).
\end{equation}
\end{enumerate}
In reference \cite{Ferrie2010Necessity} it was shown that for these generalized definitions frame representations and quasi-probability representations remain equivalent and the following ``negativity theorem'' was proven:
\begin{theorem}
A quasi-probability representation of quantum mechanics must have, for some $\rho\in\qstate$, $E\in\qeff$ either $(T\rho)(B)<0$ for some $B\in\Xi$ or $(SE)(\omega) \not\in [0,1]$ for some $\omega\in\Omega$.
\end{theorem}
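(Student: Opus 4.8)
The plan is to argue by contradiction, negating the conclusion: suppose that $(T\rho)(B)\geq 0$ for every $\rho\in\qstate$, $B\in\Sigma$, and that $(SE)(\omega)\in[0,1]$ for every $E\in\qeff$, $\omega\in\Omega$. Combined with condition~2, the first assumption makes each $T\rho$ a genuine probability measure. Using the equivalence between frame and quasi-probability representations recalled above, I write $(SE)(\omega)=D(\omega)(E)$ for the dual frame $D$; affinity of $S$ together with $S(0)=0$ makes $E\mapsto D(\omega)(E)$ additive on the effect cone, and the range restriction to $[0,1]$ makes it positive, so each fibre is (a multiple of) a state on $\bsa$. In this language the negated hypothesis says exactly that the frame $F$ is a POVM while its dual $D$ is a field of states, i.e.\ that we possess two mutually dual frames of positive operators --- precisely the configuration forbidden in finite dimensions by Theorem~\ref{theorem:noclassical}. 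The task is to reproduce that obstruction measure-theoretically.

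The first lemma I would prove concerns pure states. For a unit vector $\psi$ the projector $\op\psi\psi$ is at once a state and an effect, so inserting $\rho=E=\op\psi\psi$ into \eqref{LTP in classical rep2} gives $1=\Tr(\op\psi\psi\,\op\psi\psi)=\int_\Omega (S\op\psi\psi)\,d(T\op\psi\psi)$. Since the integrand lies in $[0,1]$ and $T\op\psi\psi$ is a probability measure, this forces $S(\op\psi\psi)=1$ almost everywhere with respect to $T\op\psi\psi$. Symmetrically, if $\phi\perp\psi$ then $0=\Tr(\op\psi\psi\,\op\phi\phi)=\int_\Omega(S\op\phi\phi)\,d(T\op\psi\psi)$ forces $S(\op\phi\phi)=0$ almost everywhere with respect to $T\op\psi\psi$.

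The contradiction then comes from a single qubit embedded in $\mathcal H$. Fix orthonormal $\ket{0},\ket{1}$, set $\ket{\pm}=\tfrac1{\sqrt2}(\ket{0}\pm\ket{1})$, $\Pi=\op{0}{0}+\op{1}{1}=\op{+}{+}+\op{-}{-}$, and $\tau=\tfrac12\op{0}{0}+\tfrac12\op{1}{1}=\tfrac12\op{+}{+}+\tfrac12\op{-}{-}$. By affinity $T\tau=\tfrac12 T\op{0}{0}+\tfrac12 T\op{1}{1}$, and the lemma gives, $T\tau$-almost everywhere, $S\op{0}{0}\in\{0,1\}$ with $S\op{0}{0}+S\op{1}{1}=S\Pi=1$; the same holds for the $\pm$ pair, so $S\op{+}{+}$ is $\{0,1\}$-valued $T\tau$-almost everywhere. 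On the set $A_0=\{\,S\op{0}{0}=1\,\}$, which carries $T\tau$-mass exactly $\tfrac12$, the fibre $D(\omega)$ restricted to the qubit subspace is a \emph{normalized} state (because $S\Pi=1$ there) that saturates the projector $\op{0}{0}$, hence equals the vector state $\ket{0}$ on that subspace; this forces $S\op{+}{+}(\omega)=\ip{0}{+}\ip{+}{0}=\tfrac12$ on $A_0$. But $\tfrac12\notin\{0,1\}$ and $T\tau(A_0)=\tfrac12>0$, contradicting that $S\op{+}{+}$ is $\{0,1\}$-valued almost everywhere. Hence no negativity-free representation exists.

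I expect the genuine difficulty to be functional-analytic rather than combinatorial. The crucial subtlety is that the definition does \emph{not} postulate $S(\id)=1$, so global normalization is unavailable and the needed normalization $S\Pi=1$ must instead be recovered \emph{locally}, $T\tau$-almost everywhere, from the orthogonality identities --- this is the step I would check most carefully. The remaining delicate points are that the fibre map $E\mapsto(SE)(\omega)$ is a bona fide state for almost every $\omega$ (the affine-to-linear extension on the order interval $[0,\id]$ together with its positivity, which is where the dual-frame equivalence of \cite{Ferrie2010Necessity} carries the weight), and that a state saturating a rank-one projector must equal the corresponding vector state --- an assertion that in infinite dimensions has to be argued through the Cauchy--Schwarz inequality for states rather than a finite spectral decomposition. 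Finally I would verify measurability of the level sets $A_\psi$ and that all ``almost everywhere'' and ``full mass'' statements compose correctly, so that the concluding measure count is legitimate.
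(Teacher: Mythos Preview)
The paper does not actually supply a proof of this theorem: it is a review article, and the sentence immediately preceding the statement simply says that the result ``was proven'' in \cite{Ferrie2010Necessity}, after first establishing that quasi-probability representations and frame representations remain equivalent in the infinite-dimensional setting. So there is no in-paper argument for you to match; what the paper indicates is only the \emph{route} of the cited reference --- reduce to the frame picture and show that a frame of positive operators cannot have a dual frame of positive operators.

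Your argument is different in spirit and is essentially correct. Rather than going through the frame/dual-frame obstruction, you run a direct Spekkens-type preparation-contextuality contradiction on a single embedded qubit: two convex decompositions of $\tau=\tfrac12\Pi$ force each of $S\op{0}{0}$ and $S\op{+}{+}$ to be $\{0,1\}$-valued $T\tau$-a.e., while the fibrewise state determined on $A_0=\{S\op{0}{0}=1\}$ pins $S\op{+}{+}$ to $\tfrac12$ there. The measure bookkeeping you sketch is right: $T\op{1}{1}(A_0)=0$ because $S\op{0}{0}=0$ $T\op{1}{1}$-a.e., giving $T\tau(A_0)=\tfrac12$; and $S(0)=0$ plus affinity do yield additivity on compatible effects, so $S\Pi=S\op{0}{0}+S\op{1}{1}=1$ on $A_0$. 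One small comment: you do not really need to invoke the dual-frame equivalence from \cite{Ferrie2010Necessity} to get that $E\mapsto(SE)(\omega)$ is a positive linear functional --- the affine map with $S(0)=0$ extends uniquely and linearly from $\qeff$ to its span $\bsa$ by the standard argument, and positivity is inherited; restricting to the $2\times2$ corner then gives a genuine (automatically normal) state on $M_2$ whenever $S\Pi(\omega)=1$, which is all you use. With that simplification your proof is self-contained and somewhat more elementary than the frame-theoretic route the paper points to, at the cost of being tailored to the qubit obstruction rather than giving a structural ``no positive dual frame'' statement.
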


Being more general, this theorem implies the previous three equivalent ``negativity theorems'' and says essentially the same; a classical representation does not exist and within a quasi-probability representation negativity must appear in the representation of the states or measurements or both.

\section{Negativity in quasi-probability representation\label{S:negativity}}
There have been a variety of approaches to the problem of characterizing what is non-classical about quantum theory. In the previous section one such notion of non-classicality was considered: the requirement of ``negative probability'', or simply \emph{negativity}.  However, the negativity theorem leaves open the question of the interpretation of negativity in any \emph{particular} representation.  In section \ref{S:negativity_ss} below we discuss the ways in which negativity can be applied as a criterion for quantumness with respect to particular choices of representation.   In section \ref{S:context locality} the traditional ideas of \emph{contextuality} and \emph{nonlocality} in relation to negativity \footnote{These are diverse and rich fields of study in their own right.  A starting point for the interested reader on contextuality is reference \cite{Held2008KochenSpecker} and quantum nonlocality is reference \cite{Redhead1989Incompleteness}.}.

\subsection{Negativity as an indicator of quantumness\label{S:negativity_ss}}

There is a strong tradition in physics of considering negativity of the Wigner function as an indicator of non-classical features of quantum states.  The non-classical features attributed to negativity of \emph{the original} Wigner function include quantum nonlocality \cite{Bell2004Speakable,Kalev2009Inadequacy}, quantum chaos \cite{Paz1993Reduction} and quantum coherence \cite{Paz1993Reduction}.  In quantum optics, however, tradition has been to use the Q- and P-functions (section \ref{S(infinite):other}) to define quantumness \cite{Mandel1986NonClassical}.  Recall equation (\ref{Glauber-Sudarshan function}), where the P-function of $\rho$ was defined implicitly through
\begin{equation*}
\rho=\int d^2\alpha P(\alpha) \op{\alpha}{\alpha}.
\end{equation*}
If $P$ has the properties of a probability distribution, then the state is a mixture of coherent states. Coherent states are minimum uncertainty states and this fact is often cited when it is stated that such a state is ``the most classical'' of the quantum states of light.  More specifically, if $P$ is a probability distribution then the quantum field cannot display genuine quantum optical effects and can be simulated by a stochastic classical electromagnetic field \cite{Walls1995Quantum}.  Technically, however, $P$ is not a function but a distribution which can be highly singular. Thus $P$ functions which are not classical distributions are difficult to experimentally prepare and verify; although, recent progress has been made \cite{Kiesel2008Experimental}.

Effort has been extended beyond qualitatively defining negativity as quantumness to \emph{quantifying} quantumness via negativity. In terms of the Wigner function, the \emph{volume} of the negative parts of the represented quantum state has been suggested as the appropriate measure of quantumness \cite{Kenfack2004Negativity}.  The \emph{distance} (in some some preferred norm on $\herm$) to the convex subset of positive Wigner functions was suggested to quantify quantumness in reference \cite{Mari2010Directly}.  This was also done in references \cite{Giraud2008Classicality,Giraud2010Quantifying} for a finite analogs of the P- and Q-functions rather than the Wigner function.

The main difficulty with interpreting negativity in a particular quasi-probability representation as a criterion for or definition of quantumness is the non-uniqueness of that particular quasi-probability representation.  We can always find a new representation in which any given state admits a non-negative quasi-probability representation.  Recall, in fact, that in some representations all states are non-negative.  Thus, negativity of some state $\rho$ in one particular arbitrary representation is a meaningless notion of quantumness \emph{per se}.

An alternative approach to establishing a connection between quantumness and negativity is to start by assuming some criterion for quantumness and then finding a choice of representation in which this criterion is expressed via negativity.  This approach has been applied in the context of multi-partite systems for which entanglement is presumed to provide a criterion for quantumness.  Entanglement is a kind of correlation between two quantum systems which cannot be achieved for classical variables and is one of the central ingredients in quantum information theory \footnote{For a recent review of entanglement, see \cite{Horodecki2009Quantum}.}.  Recall that a density matrix $\rho$ is \emph{entangled} if it \emph{cannot} be written as a convex combination of the form $\rho=\sum_k p_k  \rho^{(1)}_k\otimes \rho^{(2)}_k,$ for all $k$, where $ \rho^{(1)}_k$ and $\rho^{(2)}_k$ are states on the individual subsystems.  Consider a \emph{product-state frame} constructed out of frames for two subsystems.  That is, consider the frame $\{F^{(1)}(\lambda)\otimes F^{(2)}(\lambda')\}$, where each $\{F^{(j)}(\lambda)\}$ is a set of density matrices composing a frame.  Then, if we represent a quantum state using the dual frame, we a have a quasi-probability representation in which states with negativity are entangled.  Explicit constructions of such quasi-probability representations were developed by Schack and Caves \cite{Schack2000Explicit}.  An optimization procedure to find the representation with the \emph{minimum} amount of negativity was given in reference \cite{Sperling2009Representation}.

An obvious limitation with the above approach is that entanglement cannot capture any notion of quantumness for single quantum systems.  A second, more subtle, issue is that identification of entanglement as ``the'' crucial non-classical resources is problematic in certain branches of quantum information science.  The most striking example questioning the role of entanglement in quantum information theory is DQC1 (deterministic quantum computing with one clean qubit).  DQC1 \cite{Knill1998Power} is a model of computation which refers to any algorithm which satisfies the following (or a modification not requiring exponentially more resources)\footnote{This model of computation has served as the basis for various definitions of complexity classes, also called DQC1.  There are many open questions in this line of research and the interested reader should consult references \cite{Shor2008Estimating,Ambainis2006Computing,Shepherd2006Computation}.}:
\begin{enumerate}
\item its input consists of a single pure state in the first \emph{control} register and the remaining $n$ registers are in the maximally mixed state $\rho=2^{-n}\id$;
\item the input state is subjected to a unitary $U_n$ controlled by the state of the first register;
\item the output is a statistical estimate of $2^{-n}\Tr(U_n)$ (achieved by measuring the average of control bit in the $Z$ basis).
\end{enumerate}
DQC1 appears to be a non-trivial computational model which has been shown to have exponential advantages over (known) classical algorithms in the the follow areas: simulation of quantum systems \cite{Knill1998Power}, quadratically signed weight enumerators \cite{Knill2001Quantum}, evaluating the local density of states \cite{Emerson2004Estimation}, estimating the average fidelity decay under quantum maps \cite{Poulin2004Exponential} and estimating the value of Jones polynomials \cite{Shor2008Estimating}.

In the DQC1 model, the bipartite split between the control qubit and the rest contains no entanglement and in reference \cite{Datta2005Entanglement} it was shown that there is a vanishingly small amount of entanglement across any other bipartite splitting.  This suggests it is unlikely that entanglement is responsible for the speed-up provided by DQC1 \cite{Datta2008Studies}.  Conceptually, computation is a local task with complex dynamics and may not require the non-local, Bell-inequality-violating correlations of entanglement \cite{Laflamme2002NMR}.  A sentiment issued in reference \cite{Laflamme2002NMR} and reiterated recently by Vedral \cite{Vedral2010The} is that no one single criteria can capture quantumness and perhaps even the resources necessary for the quantum advantage must be studied on a case-by-case basis.

An important consideration for all of the above approaches is that the notion of a quantum state, considered in isolation, is operationally meaningless. Comparison with experiment always requires specifying both a state (a preparation procedure) and a measurement.  Consider two experiments, one which prepares a product state and measures the state by projecting onto an entangled basis, and a second which prepares an entangled state and measures that state in a product basis. Both experiments produce the same statistical predictions, but only the second is considered non-classical when considering the state in isolation.  As emphasized in references \cite{Spekkens2008Negativity,Ferrie2008Frame} we can overcome this obvious deficiency if we consider the whole operational set-up -- states \emph{and} measurements.  In this way, the existence of a positive quasi-probability representation implies the existence of a non-contextual ontological model and vice versa.

The formalism of references \cite{Ferrie2008Frame,Ferrie2009Framed,Ferrie2010Necessity} shows the necessity of negativity when considering a representation of the full quantum formalism.  That is, the negativity theorem (theorem \ref{theorem:neg}) applies to quasi-probability representations of quantum theory as a whole. However, the negativity theorem may not apply if we consider a specific experiment, device, or protocol which may not faithfully reproduce the full power of quantum theory.  This work suggests the following promising approach: define a classical representation \emph{of an experiment} as the existence of a frame and dual for which the convex hull of the experimentally accessible states and measurements have positive representation.  Then, we can conclude that negativity, taken to mean the absence of any representation satisfying the above conditions, corresponds to quantumness.

The above criterion for classicality was considered in reference \cite{Schack1999Classical} to question the quantum nature of proposed NMR quantum computers.  However, as noted there, the immediate objection is the following: the states and measurements can be represented by classical probabilities while the transformation between them may not be represented by classical stochastic maps.  That is, a truly classical model must represent each applied transformation in a experiment as classical stochastic mapping.  In reference \cite{Schack1999Classical}, such stochastic maps were identified for the set of NMR experiments reported at the time \footnote{Note, however, that the reasonable requirement of an \emph{efficient} classical model was not met.}.

The scope of quantum theory that has been consider thus far can be thought of as \emph{kinematical}; only the description of experimental configurations is of concern. The traditional approach to quantum theory (quantum \emph{mechanics}) focuses on how and why quantum systems change in time. Using the Wigner function formalism to describe the dynamical transformations predicted by quantum mechanics yields the dynamical law
\begin{equation}\label{Moyal_bracket}
\frac{\partial \mu_\rho}{\partial t} =\{H,\mu_\rho\}+ \sum_{n=1}^\infty \frac{1}{2^{2n}(2n+1)!} \frac{\partial^{2n+1} H}{\partial q^{2n+1}} \frac{\partial^{2n+1} \mu_\rho}{\partial p^{2n+1}},
\end{equation}
where $\mu_\rho$ is the Wigner function and $H$ is the classical Hamiltonian and $\{H,\mu_\rho\}$ is the classical Poisson bracket.  Notice then that Equation \eqref{Moyal_bracket} is of the form ``classical evolution'' + ``quantum correction terms''.  Using this formalism, one can then do more than discuss which experimental procedures are classical.  Now one can discuss the \emph{transitions} between quantum and classical descriptions, a process known as \emph{decoherence} \cite{Paz1993Reduction,Habib1998Decoherence,Joos2003Decoherence}.  The representation of the dynamics was also studied for the spherical phase space in  \cite{Zueco2007Bopp,Kalmykov2008Phase}.  The goal is to compare the representation of the quantum dynamics (be it the Schrodinger equation or the more general \emph{master equation}) to the natural classical dynamics of the representation's phase space.  The challenge for finite dimensional systems is that no natural notion of discrete phase space exists for classical system.  This problem has been recently studied by Livine \cite{Livine2010Notes} by introducing a \emph{discrete differential calculus} for the discrete phase spaces of Wootters.  However, beyond these few examples, transformations and dynamics have not been studied anywhere near to the extent that states have for quasi-probability representations and presents itself as a open problem.

\subsection{Traditional contextuality and nonlocality\label{S:context locality}}

The traditional definition of contextuality evolved from a theorem
which appears in a paper by Kochen and Specker \cite{Kochen1967Problem}.  The
Kochen-Specker theorem concerns the standard quantum formalism:
physical systems are assigned states in a complex Hilbert space
$\mathcal H$ and measurements are made of observables represented by
Hermitian operators.  The theorem establishes a contradiction
between a set of plausible assumptions which together imply that
quantum systems possess a consistent set of pre-measurement values
for observable quantities. Let $\mathcal H$ be the Hilbert space
associated with a quantum system and $ A\in\herm$ be the operator
associated with some observable.  The function $f_\psi(A)$
represents the value of the observable when the system is in
state $\psi$.  One assumption used to derive the contradiction is
that for any function $F$, $f_\psi(F(A))=F(f_\psi(A))$.  This is plausible because, for
example, we would expect that the value of $A^2$ could be obtained
in this way from the value of $A$.

The conclusion of Kochen-Specker theorem implies the
following counterintuitive example \cite{Isham1995Lectures}.  Suppose three
operators $ A$, $ B$, and $  C$ satisfy $[ A, B]=0=[ A,  C]$, but $[
B,  C]\neq0$.  Then, the value of the observable $A$ will depend on
whether observable $B$ or $C$ is chosen to be measured as well. That
is, assuming that physical systems do possess values which can be
revealed via measurements, the value of $A$ depends on the \emph{context} of the
measurement.

What the Kochen-Specker theorem establishes then is the mathematical
framework of quantum theory does not allow for a
\emph{non-contextual} model for pre-measurement values.  This fact
is often expressed via the phrase ``quantum theory is contextual''.

The original notion of contextuality in lacking in the sense that it only applies to the standard formalism of quantum theory and does not apply to general operational models.  This problem was addressed by Spekkens as discussed in section \ref{sec:probtoquasi} above.  The notion of contextuality defined by Spekkens is more general; one can recover the original assumptions of Kochen-Specker by assuming that the projector valued measures in the spectral resolutions of observables are represented by dispersion free (0-1 valued) conditional probabilities (these are also called \emph{sharp indicator functions}) \footnote{Cabello has also generalized the notion of contextuality to POVMs \cite{Cabello2003KochenSpecker}.  Again, however, the additional assumption of dispersion free condition probabilities is used.  See \cite{Grudka2008There} for an elaboration on this point.  For a more broad discussion on contextuality see \cite{Morris2009Topics} and \cite{Harrigan2007Ontological}.}.  Since the set of fewer assumptions already contains a contradiction when taken in conjunction, the addition of the assumption of Kochen-Specker is unnecessary.  Thus, we need only consider the more general notion of contextuality we have already defined.  This more general notion of contextuality has also recently been subject to experimental tests \cite{Spekkens2009Preparation}.

A hidden variable theory originally formulated by de Broglie and later by Bohm \cite{Bohm1989Quantum} is perhaps the most famous example of an ontological model of quantum theory.  The model assumes that for a given experimental configuration, there exists a particle with well defined trajectory and a quantum state $\psi$.  The hidden variable is the position of the particle in real space.  That is, the classical state space is $\Lambda=\mathbb R^3\times\mathcal H$.  The Hilbert space is included in the state space as its serves as a wave which guides the particle.  If at any time the particle is distributed according to quantum probability distribution $\abs{\psi}^2$, it remains so.  Thus, so long as it is assumed that the particle is prepared according to this distribution, the model provides the same predictions as the standard formulation of quantum theory.

Note that this model does not fit into the framework of quasi-probability representations.  Exactly as it was for the Beltrametti-Bugajski model, the de Broglie-Bohm model does not consider the entire range of possible quantum states.  Where a classical representation contains a convex-linear mapping $\rho\mapsto\mu(\lambda)$ for all $\rho\in\den$, the de Broglie-Bohm model considers only a mapping with domain $\mathcal H$.  Bell notes that \cite{Bell2004Speakable} ``in the de Broglie-Bohm theory a fundamental significance is given to the wavefunction, and it cannot be transferred to the density matrix.''

Bell \emph{does not} claim that the situation is such that the de Broglie-Bohm model \emph{cannot} be extended to include density operators.  The key words in his comment are ``fundament significance''.  Indeed, the de Broglie-Bohm model \emph{can} be extended to include density operators provided this extension is either contextual or contains negativity.  In either case, the pure states (wavefunctions) retain their significance while the density operators possess non-classical features.  As an example, the de Broglie-Bohm model could be such that $(\rho, c_\mathcal P)\mapsto\mu_{c_\mathcal P}(\lambda)$ where each preparation consists of a density operator $\rho$ supplied with a context $c_\mathcal P$ which specifies a particular convex decomposition of $\rho$ into pure states.  Such a model would be preparation contextual.

The non-locality debate was initiated by a paper by Einstein, Podolsky and Rosen (EPR) \cite{Einstein1935Can} where it was argued that quantum mechanics is \emph{incomplete} (each element of physical reality does not have a counterpart in quantum theory) if special relativity remains valid.  The latter means physical causation must be local or events cannot have causes outside of their past light cones.  Using a particular spatially separated quantum system, and some standard quantum theory, EPR concluded that quantum mechanics is either incomplete or nonlocal (or both!).  Locality was such a desired property of any theory that quantum mechanics was concluded to be incomplete.  That is, there must be elements of physical reality (hidden variables) which quantum mechanics does not account for.

The argument of EPR was reformulated by Bohm \cite{Bohm1989Quantum} for two qubits.  The argument is built around the following hypothetical experiment.  Two parties, Alice and Bob, are at distant locations with a source midway between them creating quantum systems described by the quantum state
\begin{equation}\label{bellstate}
\psi=\frac{1}{\sqrt{2}}(\phi_1\otimes\phi_2 - \phi_2\otimes\phi_1),
\end{equation}
where $\{\phi_1,\phi_2\}$ is an orthonormal basis for a qubit.  One particle is sent to Alice and the other to Bob.  Alice performs the projective two-outcome measurement $\{P_1,P_2\}$ on the particle which was sent to her.  The state in equation \eqref{bellstate} is such that Alice, once she performs her measurement, she can predict with certainty the outcome Bob receives when he performs the same measurement at his side of the experiment \emph{regardless of whether or not the measurement events are spacelike separated (i.e. nonlocal)}.  For example, Alice could perform the measurement $\{\phi_1\phi_1^\ast,\phi_2\phi_2^\ast\}$.  According to the collapse postulate, if Alice registers the first outcome, Bob particle will immediately collapse to $\phi_2$ and he is certain to obtain the second outcome if he were to make the same measurement.  Therefore, unless there exists hidden variables which pre-determine the possible outcomes when the particles are created, quantum theory is \emph{nonlocal}.

Bell later investigated the possibility of finding the hidden variables Einstein thought to exist \cite{Bell2004Speakable}.  He noted immediately that the de Broglie-Bohm theory was such a theory yet in contained an astonishingly nonlocal character.  He was able to prove that any hidden variable theory of quantum phenomena must possess nonlocal features.  This is now called Bell's theorem.

The proof is by contraction and follows the general line of reasoning which lead to the negativity theorem: build a mathematical model with assumptions that can be identified with (or motivated by) some notion of classicality then prove that quantum theory does not satisfy these assumptions.  Consider the EPR experimental setup.  Alice and Bob can each perform a two-outcome measurement with outcomes labeled $A$ and $B$, respectively.  Without loss of generality, the outcomes can be assigned numerical values $A,B=\pm 1$.

Suppose there exist a classical state space $\Lambda$ (i.e. a set of hidden variables or, as we have called it above, an ontology) which serves to determine the outcomes $A$ and $B$.  Probabilistic knowledge of the state is represented by a density $\mu(\lambda)\geq0$ which is normalized
\[
\int_\Lambda d\lambda \mu(\lambda)=1.
\]
The different measurements Alice and Bob can perform are parameterized by detector settings $a$ and $b$, respectively.  Locality is enforced by assuming that the outcomes $A$ and $B$ depend only the local detector settings and the global state.  That is $A=A(a,\lambda)$ is allowed but $A=A(a,b,\lambda)$ is not.  Define the correlation function
\begin{equation}\label{bell_correlation}
C(a,b)=\int_\Lambda d\lambda A(a,\lambda)B(b,\lambda)\mu(\lambda).
\end{equation}
Bell's theorem states that the correlations obtained in the EPR experiment (i.e. a particular quantum experiment) cannot satisfy this equation.  The proof follows by deriving an inequality from equation \eqref{bell_correlation} such as
\begin{equation}\label{CHSH}
\abs{C(a,b)-C(a,c)}\leq 1+ C(b,c).
\end{equation}
This inequality holds for any hidden variable model which satisfies the locality assumption.  For the quantum state in equation \eqref{bellstate}, the inequality is violated.  This is the contradiction between the quantum theory and a local hidden variable model which proves Bell's theorem.

It was noted that the assumptions which go into the hidden variable models first considered by Bell imply those models are \emph{deterministic}.  That is, the theorem did not exclude models which suggested quantum theory only provides \emph{stochastic} (or probabilistic) information of the possible outcomes of measurements.  Bell later extended the theorem to include such models.  For the EPR experimental setup, let the conditional probability of outcome $A=1$, for Alice, given the state (hidden variable) is $\lambda\in\Lambda$ be denoted $M_A(\lambda)$ and similarly define $M_B(\lambda)$ for Bob.  Now denote the conditional joint probability of the simultaneous outcomes $A,B=1$ by $M_{AB}(\lambda)$.  Fine \cite{Fine1982Some} defines a \emph{stochastic hidden variable model} as one which satisfies
\begin{equation}\label{stocFine1}
\Pr(A=1)=\int_\Lambda d\lambda M_A(\lambda)\mu(\lambda)
\end{equation}
and
\begin{equation}\label{stocFine2}
\Pr(A=1,B=1)=\int_\Lambda d\lambda M_{AB}(\lambda)\mu(\lambda).
\end{equation}
If $M_{AB}(\lambda)=M_A(\lambda)M_B(\lambda)$, then the model is \emph{factorizable}.  Bell claimed this also encoded the assumption of locality.  Again, it can be shown that quantum theory is in contradiction with an inequality derived from these assumptions.  Fine showed that a factorizable stochastic hidden variable model exists for the EPR-type correlation experiment if and only if a deterministic hidden variable model exists for the experiment.  Since the latter is ruled out, the former is also ruled out.

It is often stated that \emph{the} consequence of Bell's theorem is ``quantum theory is non-local''.  However the theorem only states that quantum theory does not satisfy the assumptions which go into a classical model \emph{Bell defines as local}.  It is not necessary that the locality assumption is violated.  Nor is it unanimously agreed that the mathematical condition Bell refers to as locality in the stochastic hidden variable models reflects any physical significance \cite{Fine1982Some}.  Next, we will see how such a claim can be supported by appealing to the notions of negativity.

Notice that a (non-factorizable) stochastic hidden variable model is exactly a classical representation. Then factorizability (Bell locality) is an additional requirement.  However, the negativity theorem rules out a classical representation independently of any assumption of locality for the most general quantum experiment.  The negativity theorem implies that one (or more) of the constraints which go into the definition of a classical representation are false.  Indeed, relaxing the assumptions of positive probability yields a quasi-probability representation which is not in conflict with quantum theory.  The existence of \emph{positive} probability distributions in the hidden variables Bell attempts to rule out is often considered an unquestionable assumption.  However, if it is the case that negative probability encodes something about Nature that is independent of locality, then  it is not necessarily the case that Bell’s theorem implies
nonlocality
 \cite{Han1996Explicit,Cereceda2000Local,Rothman2001Hidden}.

\section{Summary\label{S:Conclusion}}
In this article, we have seen how quasi-probability representations arise naturally out of the search for an ontological model of quantum theory.  A non-contextual ontological model implies the existence of a classical representation, which was shown to be in contradiction with the quantum framework.  However, by relaxing the requirement of positive probabilities we have seen that many \emph{quasi}-probability representations are consistent with quantum theory, and indeed useful for quantum information theory.

We have shown how the theory of frames unifies the quasi-probability representations into a single formalism.  This frame formalism allows two canonical methods for incorporating measurements into the framework.  In the first, we use a \emph{dual} frame to represent measurements.  This gives rise to framework respecting the usual dot product between the represented functions (which is the classical rule of total probability) and is equivalent to the formal definition of a quasi-probability representation.  In the second method, the \emph{same} frame is used to represent states and measurements.  This gives rise to the \emph{deformed} probability representation.  Here, negativity need not appear in the representations of states and measurements but the Born rule is replaced by a deformation of the classical rule of total probability.

Finally, we have noted that there is significant interest surrounding the presence and interpretation of negativity in the quasi-probability representations.  In particular, the use of negative probability allows us to avoid the contradictions which had lead to conclusions that quantum theory is contextual (via the Kochen-Specker theorem) and nonlocal (via Bell's theorem).

We have discussed how negativity of \emph{states alone} in some particular representation is a meaningless criterion for quantumness \emph{per se}.  Alternatively, we can accept some well defined notion of a truly quantum state (such as an entangled state) and show that negativity is necessary for such state in some representation.  If we are ultimately to make use of our notion of quantumness, however, it needs to be defined for an operational task and hence incorporate the full operational formalism of quantum theory.  That is, it must incorporate measurements (which we have done here) and \emph{dynamical transformation} (which we have identified as a largely unexplored area of research).

\begin{acknowledgements}
I thank Joseph Emerson, Chris Fuchs, Ryan Morris, Cozmin Ududec, Rob Spekkens, Victor Veitch and Joel Wallman for many helpful discussions.  In particular, I thank Joseph Emerson for a careful reading of an early draft and suggestions for its improvement.   This work was financially supported by the Canadian government through NSERC.
\end{acknowledgements}

\appendix

\section{Notations and conventions\label{appendix:Mathematical primer}}
Quantum theory makes use of complex Hilbert spaces.  If these spaces are finite dimensional, then they are equivalent to inner product spaces.  Unless otherwise noted, a Hilbert space $\mathcal H$ will be assumed to have dimension $d<\infty$ and (for $\psi,\phi\in\mathcal H$) its inner product will be denoted $\ip{\psi}{\phi}$.  The following list defines some specials sets of linear operators acting on $\mathcal H$.

\begin{enumerate}
\item An operator $U$ satisfying
\[
U^\dag U= U U^\dag = \id,
\]
is called \emph{unitary}.  The set of all unitary operators is denoted $\mathbb U$.
\item An operator $A$ satisfying
\[
\ip{A\psi}{\phi}=\ip{\psi}{A\phi},
\]
for all $\psi,\phi\in\mathcal H$, is called \emph{Hermitian}.  The set of all Hermitian operators is denoted $\herm$.
\item An operator $E$ satisfying
\[
0\leq\ip{\psi}{E\psi}\leq1,
\]
for all $\psi\in\mathcal H$, is called an \emph{effect}.  The set of all effects is denoted $\eff$.
\item An effect $\rho$ satisfying
\[
\Tr(\rho)=1
\]
is called a \emph{density operator}.  The set of all density operators is denoted $\den$.
\item A set of effects $\{E_k\}$ which satisfy
\[
\sum_k E_k=\id
\]
is called a POVM (positive operator valued measure).  The set of all POVMs is denoted $\povm$.
\item An effect $P$ satisfying
\[
P^2=P
\]
is called a \emph{projector}.  The set of all projectors is denoted $\proj$.
\item A set of projectors $\{P_k\}$, each of rank 1, which satisfy
\[
\sum_k P_k=\id
\]
is called a PVM (projector valued measure).  The set of all PVMs is denoted $\pvm$.
\end{enumerate}

Note that from these definitions we have $\proj\subset\den\subset\eff\subset\herm$ and $\pvm\subset\povm$.  The set $\herm$ defines its own Hilbert space with inner product (for $A,B\in\herm$) $\ip{A}{B}:=\Tr(AB)$.  The dimension of $\herm$ is $d^2$.

 The following are some examples of important operators used in this paper.  Consider the operator $Z$ whose spectrum is $\spec(Z)=\{\omega^k: k\in \mathbb Z_d\}$, where $\omega^k=\w k$.  The
eigenvectors form a basis for $\mathcal H$ and are denoted
$\{\phi_k\}$. Consider also the operator defined by
$ X\phi_k=\phi_{k+1}$, where all arithmetic is modulo $d$.  Define $Y$ implicitly through $[ X, Z]=2i Y$.  The
operators $ Z,  X$ and $ Y$ are often called \emph{generalized
Pauli operators} since they are indeed the usual Pauli operators
when $d=2$.  The \emph{parity operator} is defined by $ P\phi_k=\phi_{-k}$.

\end{document}